\def\be{\begin{equation}}
\def\ee{\end{equation}}
\newcommand{\vol}{\mathop{\mathrm{Vol}}}
\newcommand{\sign}{\mathop{\mathrm{sign}}}
\numberwithin{equation}{section}
\newtheorem{theorem}{Theorem}[section]
\newtheorem{proposition}[theorem]{Proposition}
\newtheorem{lemma}[theorem]{Lemma}
\newtheorem{corollary}[theorem]{Corollary}
\newtheorem{definition}{Definition}
\newtheorem{remark}{Remark}
\begin{document}

\title{A Solution to the Combinatorial Puzzle of Mayer's Virial Expansion}

\author[S. J. Tate]{Stephen James Tate}
\address{Ruhr Universit\"{a}t Bochum, Universit\"{a}tsstr. 150, 44801 Bochum}
\email{sjtate1988@gmail.com}

\subjclass{82B21, 82B26, 82D05, 05C22}

\keywords{virial expansion, cluster expansion, two-connected graph, involution, Tonks gas, hard-core gas}

\begin{abstract}
Mayer's second theorem in the context of a classical gas model allows us to write the coefficients of the virial expansion of pressure in terms of weighted two-connected graphs. Labelle, Leroux and Ducharme studied the graph weights arising from the one-dimensional hardcore gas model and noticed that the sum of these weights over all two-connected graphs with $n$ vertices is $-n(n-2)!$. This paper addresses the question of achieving a purely combinatorial proof of this observation.
\end{abstract}

\thanks{Work supported by EPSRC grant EP/G056390/1 and SFB TR12}
\thanks{\copyright{} 2014 by the author. This paper may be reproduced, in its entirety, for non-commercial purposes.}

\maketitle

\section{Introduction}
This paper considers (multivariate) generating functions of the form:
\be \sum\limits_{n=1}^{\infty} \frac{z^n}{n!} \sum\limits_{G \in \mathcal{H}[n]} \tilde{w}(G) s^{e(G)} , \label{eq:genfn} \ee
where $\mathcal{H}$ indicates a subclass of graphs. $e(G)$ is the number of edges a graph $G$. $[n]$ indicates that the graph has vertex set $\{1, \cdots, n\}=:[n]$ and $\tilde{w}$ is a specified positive graph weight. The exponents of the variables $z$ and $s$ indicate the size of the vertex set, respectively, the number of edges.

When evaluating \eqref{eq:genfn} at $s=-1$, there are some remarkable cancellations, leading, in some cases, to simple formul\ae{} for the coefficients. This paper gives combinatorial explanations for the class of two-connected graphs in particular. Two-connected graphs are those graphs for which we can remove any vertex and its incident edges and the resulting graph remains connected. 

There are four cases of \eqref{eq:genfn} used in Mayer's theory of cluster and virial expansions, depending on the class of graphs considered and the weights. The sum is either over connected graphs, denoted by $\mathcal{C}$, or two-connected graphs, denoted by $\mathcal{B}$. The weights are either those for a discrete hard core gas, often referred to as the one-particle hard core gas, or for a continuum one-dimensional hard core gas, also named the Tonks gas. For the discrete gas, the goal is to count the number of graphs; for the continuum model, the coefficients are given by the volume of a polytope associated with the graph $G$. We write a graph $G$ as the ordered pair of its vertex set and edge set as $(V(G),E(G))$. 

We define the polytope corresponding to the graph $G$ as:
\be \Pi_G :=\{(\mathbf{x})_{[2,n]}\in \mathbb{R}^{n-1} \vert \, |\mathbf{x}_i-\mathbf{x}_j|<1 \; \forall\{i,j\} \in E(G) \} \label{eq:graphpolytope} , \ee
with $x_1=0$. We use the notation $(\mathbf{x})_{[2,n]}:=(x_2, \cdots, x_n)$.

Mayer, in \cite{MM40}, established important connections between weighted graph generating functions and expansions in statistical mechanics. These connections are also presented in the framework of combinatorial species of structure in the work of Ducharme, Labelle and Leroux \cite{DLL07, L04}, Leroux and Kaouche \cite{KL08} and Faris \cite{F10}. 

The results of Mayer are that the weighted sum over connected graphs gives the pressure as a function of activity and the weighted sum over two-connected graphs is related to the virial expansion of pressure expanded in terms of density. The two formul\ae{} are:
\begin{align} \beta P(z)&=\sum\limits_{n=1}^{\infty} \frac{z^n}{n!} \sum\limits_{G \in \mathcal{C}[n]} w(G) \\
\beta P(\rho) &= \rho - \sum\limits_{n=2}^{\infty}(n-1) \frac{\rho^n}{n!} \sum\limits_{G \in \mathcal{B}[n]}w(G) , \end{align}
where $w(G)$ is the graph weight specified by the particular model.

The answers for the four cases are given by the formul\ae{} for the connected graph discrete case:
\be \sum\limits_{G \in \mathcal{C}[n]}(-1)^{e(G)}= (-1)^{n-1}(n-1)!  \ee
and for the connected graph continuum case:
\be \sum\limits_{G \in \mathcal{C}[n]}(-1)^{e(G)}\vol(\Pi_G) = (-1)^{n-1}n^{n-1} . \ee
There are also formul\ae{} for the two-connected discrete case:
\be \sum\limits_{G \in \mathcal{B}[n]}(-1)^{e(G)}=-(n-2)! \label{eq:virial1PHC} \ee
and the two-connected graph continuum case:
\be \sum\limits_{G \in \mathcal{B}[n]}(-1)^{e(G)}\vol(\Pi_G)=-n(n-2)! \label{eq:virialTG} .\ee
For the discrete cases the results are straightforward computations. For the continuum case, derivations are given in \cite{DLL07}. The statistical mechanical background is explained in full detail in Section \ref{sec:models}.

It is tempting to try and find a simple combinatorial interpretation that explains the cancellations in a direct way. This was posed as a challenge in the paper of Ducharme, Labelle and Leroux \cite{DLL07}. In the connected graph cases, this was done by Bernardi \cite{B08}. The approach was to use an involution that exhibits the result of the almost perfect cancellation as a contribution from the fixed points of the involution. The fixed points were identified as increasing trees in the discrete case and rooted trees in the continuum case. The purpose of this paper is to present similar derivations for the two-connected graph cases. As always this is considerably more complicated.  
 
The concept of using an involution to understand the cancellations is natural. Recall the formula that, for any finite non-empty set $X$, we have: 
\be \sum\limits_{S \subseteq X}(-1)^{|S|}=0 . \ee
In order to prove this, we show we have the same number of sets with even cardinality as we do of odd cardinality. One approach is to pair sets differing by one element. This pairing idea is captured by the involution. In this example, the involution is defined by first fixing a singleton subset of $X$, say $\{i\}$, and taking the symmetric difference $\Psi: S \mapsto S \Delta \{i\}$. 

If we consider a fixed vertex set $[n]$ for a graph, then a graph $G$ is determined precisely by its edge set $E(G)$, which are subsets of the collection of unordered pairs in $[n]$, denoted $[n]^{(2)}$. We can also use this symmetric difference operation on the edge set for graphs. An important complication is that we consider particular subsets for which taking the symmetric difference with a fixed edge will not suffice, since the removal or addition of the edge may take us outside of the prescribed collection of subsets. We need to find an efficient way of choosing an edge based on the graph we are considering so that we obtain a pairing that will not take us outside of the prescribed collection.

In section \ref{sec:results}, we present the combinatorial structures that give the interpretations of the cancellations in the two-connected case. Sections \ref{sec:1PHCinv} and \ref{sec:CHCGinv} give the proofs of the one particle hard core and the Tonks gas case respectively. In the latter, the decomposition of polytopes into unimodular simplices attributed to Lass is given so that it may be proved as an extension of the previous case. We provide an interpretation why $2n-3$ should appear as the number of edges in section \ref{sec:structure}.

From the perspective of statistical mechanics, the motivation for understanding such cancellations is to be able to adapt the understanding to models where more complicated weights are used. Indeed, the key idea is to emulate what is done for the connections between connected graphs and trees and understand how to modify these in this context. 

The first parallel to draw is that the involution of Bernardi fits within a general concept of externally and internally active elements of a set with a matroid structure as given by Bj\"{o}rner and Sokal \cite{B92, S05}. The idea to emphasise here is that this allows the set of connected graphs to be partitioned into subsets, indexed by trees. When we consider graphs with the partial order defined by bond inclusion, the blocks in this partition are Boolean. That is, each set has a tree $\tau$ as minimal graph and a corresponding maximal graph $R(\tau)$, all graphs with edge set $E$ such that $E(\tau) \subset E \subset E(R(\tau))$ are included in the set in the partition. This form of a partition lends itself well to performing estimates on the cluster coefficients. This was actually realised earlier by Penrose \cite{P67} in the specific case of connected graphs. Understanding this partition into Boolean subsets also gives rise to an alternative involution. It is intriguing to realise that the general construction does not include the Penrose construction as a subcase. These ideas are addressed in section \ref{sec:extend}.

This combinatorial understanding is also closely linked to the tree-graph identities of Brydges Battle and Federbush \cite{BF78, B84, B84a, BF84b}, for which a symmetric version is provided by Abdesselam and Rivasseau \cite{AR94} and a matroid generalisation by Faris \cite{F12}. These identities allow estimations to be made on these coefficients, since we may express the sum over connected graphs as a sum over trees with modified weights. A greater goal is to extend these to partially ordered sets where a matroid structure may not be present. 

Interest in providing such bounds on the virial expansion coefficients has recently been renewed with the papers by Pulvirenti and Tsagkarogiannis \cite{PT11} and Morais and Procacci \cite{MP13}, which use the Canonical Ensemble as a method of achieving bounds. The paper by Jansen \cite{J11b} suggests that at high temperatures the radius of convergence should be improved: actual improvements on the bounds of Lebowitz and Penrose \cite{LP64} have been proposed recently \cite{T13}. 

\section{The Two Models from Statistical Mechanics}
\label{sec:models}
In a classical gas system of $n$ indistinguishable interacting particles in a vessel $\Lambda \subset \mathbb{R}^d$ with only two-body interactions and no external potential, we may write the Hamiltonian as:
\be H(\mathbf{p},\mathbf{q})=\sum\limits_{i=1}^n \frac{p_i^2}{2m}+\sum\limits_{1 \leq i<j \leq n}\varphi(q_i,q_j) , \ee
where $\mathbf{q}$ represents the generalised coordinates and $\mathbf{p}$ the conjugate momenta. The canonical partition function of the gas model is:
\be Z(\Lambda,\beta,n)=\frac{1}{n!}\prod\limits_{i=1}^n \left( \int_{\Lambda}\! \mathrm{d}^dq_i \int_{\mathbb{R}^d}\! \mathrm{d}^dp_i\right) \exp(- \beta H) . \ee
Integrating out the Gaussian integrals for the momenta, we obtain a factor $\frac{1}{\lambda^n}$, where $\lambda$ is the thermal wavelength. The partition function is therefore:
\be Z(\Lambda, \beta,n)=\frac{1}{n!\lambda^n} \prod\limits_{i=1}^n \left(\int_{\Lambda}\! \mathrm{d}^dq_i\right) \prod\limits_{1 \leq i<j \leq n}\exp(-\beta \varphi(q_i,q_j)) \label{eq:canonicalpfinvol} . \ee
The Mayer trick \cite{MM40}, allows us to rewrite the canonical partition function in terms of weighted graphs. The first stage is to define the Mayer $f$-function:
\be f(q_i,q_j):=\exp(-\beta \varphi(q_i,q_j))-1 .\ee
We realise that the product of exponentials in \eqref{eq:canonicalpfinvol} may be rewritten as:
\be \prod\limits_{1 \leq i < j \leq n} \exp(-\beta \varphi(q_i,q_j))=\prod\limits_{1 \leq i<j \leq n}(1+f(q_i,q_j))=\sum\limits_{G \in \mathcal{G}[n]}\prod\limits_{(i,j) \in E(G)}f(q_i,q_j) , \ee
where $\mathcal{G}[n]$ is the set of simple graphs (no multiple edges or loops) on $n$ points. We write a graph $G=(E(G),V(G))$, where $E(G) \subset [n]^{(2)}$ is the edge set and $V(G)=[n]$ is the vertex set. This motivates the graph weight:
\be W(G)=\prod\limits_{i=1}^n \left(\int_{\Lambda}\! \mathrm{d}^dq_i\right) \prod\limits_{(k,l) \in E(G)}f(q_k,q_l). \ee
We can therefore write the partition function as:
\be Z(\Lambda, \beta, n)=\frac{1}{n!\lambda^n}\sum\limits_{G \in \mathcal{G}[n]}W(G). \ee
In order to obtain the grand canonical partition function we sum:
\be \Xi(\Lambda, \beta, z)=\sum\limits_{n=0}^{\infty} z^n\lambda^n Z(\lambda, \beta, n), \ee
where $z=e^{\beta \mu}$ the activity and $\mu$ is the chemical potential. In terms of graphs, we write this as:
\be \Xi(\Lambda, \beta, z)=\sum\limits_{n=0}^{\infty}\frac{z^n}{n!}\sum\limits_{G \in \mathcal{G}[n]}W(G) =: \mathcal{G}_W(z). \ee
The pressure is defined to be:
\be \beta P= \lim_{|\Lambda| \uparrow \infty}\frac{1}{|\Lambda|}\log \Xi(\Lambda, \beta, z) .\ee
If we define the new weight $w(G)=\lim_{|\Lambda| \uparrow \infty} \frac{1}{|\Lambda|}W(G)$, then the pressure function can be written in terms of connected graphs:
\be \beta P = \mathcal{C}_w(z)=\sum\limits_{n=1}^{\infty} \frac{z^n}{n!}\sum\limits_{G \in \mathcal{C}[n]}w(G) \label{eq:pressureconnectedinv} .\ee
This is the content of Mayer's First Theorem \cite{MM40} and is explained in the paper \cite{DLL07}.
The density $\rho$ is:
\be \rho= z\frac{\partial}{\partial z} \beta P = \mathcal{C}^{\bullet}_w(z) , \ee
where $\mathcal{C}^{\bullet}$ denotes a rooted connected graph. From Mayer's Second Theorem \cite{MM40} or by the Dissymmetry Theorem \cite{DLL07}, we are able to obtain a series expansion for pressure in terms of density, in which the coefficients are, up to a prefactor, the $w$-weighted two-connected graphs.
\be \beta P = \rho - \sum\limits_{n=2}^{\infty}\frac{(n-1)\rho^n}{n!} \sum\limits_{G \in \mathcal{B}[n]}w(G) \label{eq:pressure2connectedinv}.\ee
One may also consult the book by McCoy \cite{M10} for an explanation of the derivation of these two theorems. 
\subsection{One Particle Hard Core Gas}
The potential for a one-particle hard core gas is:
\be \varphi(q_i,q_j)=\infty , \ee
so that $\exp(-\beta \varphi(q_i,q_j))=0$ and $f(q_i,q_j)=-1$.
The grand canonical partition function is: 
\be \Xi(z)=1+z . \ee
The statistical mechanical relationships give pressure and density as:
\begin{align} \beta P &= \log(1+z) \label{eq:1PHCpressure}\\ \rho&=\frac{z}{1+z}  \label{eq:1PHCdensity}. \end{align}
We may invert \eqref{eq:1PHCdensity}, to obtain:
\be z = \frac{\rho}{1-\rho} \ee
and substitute for $z$ in \eqref{eq:1PHCpressure}, to obtain:
\be \beta P = - \log(1-\rho) . \ee
The two series expansions derived from statistical mechanics are:
\begin{align} \beta P &= \sum\limits_{n=1}^{\infty} \frac{(-1)^{n-1}z^n}{n} \\
\beta P &=\sum\limits_{n=1}^{\infty}\frac{\rho^n}{n} .\end{align}
If we compare these two power series with \eqref{eq:pressureconnectedinv} and \eqref{eq:pressure2connectedinv} respectively, using the graph weight $w(G)=(-1)^{e(G)}$, where $e(G)$ is the number of edges in graph $G$, we obtain:
\begin{align} \sum\limits_{G \in \mathcal{C}[n]}(-1)^{e(G)}&=(-1)^{n-1}(n-1)! \\
\sum\limits_{G \in \mathcal{B}[n]}(-1)^{e(G)}&=-(n-2)! .\end{align}

\subsection{Continuum Hard Core Gas - Tonks Gas}
For a continuum hard core gas in one dimension with diameter $1$, the potential is:
\be \varphi(q_i,q_j)= \begin{cases} \infty &\text{ if } \; \; \; |q_i-q_j|<1 \\ 0 &\text{ otherwise } \end{cases} . \ee
The exponential and Mayer $f$-functions are:
\begin{align} \exp(-\beta \varphi(q_i,q_j))&= \begin{cases} 0 &\text{ if } \; \; \: |q_i-q_j|<1 \\ 1 &\text{ otherwise } \end{cases}. \\
 f(q_i,q_j)&= \begin{cases} -1 &\text{ if } \; \; \; |q_i-q_j|<1 \\ 0 &\text{ otherwise } \end{cases}. \end{align}
We therefore have the graph weight:
\be w(G)=(-1)^{e(G)}\int\limits_{\mathbb{R}^{n-1}} \! \prod\limits_{\{i,j\} \in E(G)}\chi(|x_i-x_j|<1)\,  \mathrm{d}x_2 \cdots \mathrm{d}x_n , \ee
where $x_1=0$ and $\chi$ is the indicator function. 

In \cite{DLL07}, this is interpreted as a the volume of a convex polytope $\Pi_G$ in $\mathbb{R}^{n-1}$. The polytope is defined by:
\be \Pi_G = \{ (\mathbf{x})_{[2,n]} \in \mathbb{R}^{n-1} \vert |x_i-x_j|<1 \, \forall \{i,j\} \in E(G) \, x_1=0 \}. \notag \ee
We use the notation $[2,n]=\{2, 3, \cdots, n\}$ and $(\mathbf{x})_{[2,n]}=(x_2, \cdots x_n)$.

Hence the graph weight may be written as:
\be w(G)=(-1)^{e(G)}\vol(\Pi_G) . \ee
The derivation of the cluster and virial expansions, using statistical mechanics, are more difficult in this case, but they are done in \cite{DLL07} and we achieve:
\begin{align} \beta P &=W(z)=\sum\limits_{n=1}^{\infty}\frac{(-n)^{n-1}z^n}{n!} \label{eq:TGclusterexp} \\
\beta P &= \frac{\rho}{1-\rho}=\sum\limits_{n=1}^{\infty} \rho^n \label{eq:TGvirialexp} , \end{align}
where $W(z)$ is the Lambert $W$-function. 

If we compare these to the results of Mayer's First and Second Theorems, \eqref{eq:pressureconnectedinv} and \eqref{eq:pressure2connectedinv}, we obtain the combinatorial relationships:
\begin{align} \sum\limits_{G \in \mathcal{C}[n]}(-1)^{e(G)}\vol(\Pi_G) &=(-1)^{n-1}n^{n-1} \\
\sum\limits_{G \in \mathcal{B}[n]}(-1)^{e(G)}\vol(\Pi_G) &=-n(n-2)! .\end{align}

\section{Results}
\label{sec:results}
The results of this article are the combinatorial interpretations of the cancellations in the alternating sums of weighted two-connected graphs.
\begin{theorem}[Combinatorial Identity from the one-particle hard-core model]
\label{thm:1PHC}
The difference of two-connected graphs with an even number of edges and an odd number of edges is given by the following formula: 
\be \sum\limits_{G \in \mathcal{B}[n]}(-1)^{e(G)}=-(n-2)!. \ee
\end{theorem}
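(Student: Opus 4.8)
The plan is to prove the identity by constructing a sign-reversing involution $\Phi$ on the set $\mathcal{B}[n]$ of two-connected graphs on $[n]$, in the spirit of Bernardi's treatment of the connected cases. Since $(-1)^{e(G)}$ changes sign whenever a single edge is added or removed, it suffices to pair each non-fixed graph $G$ with a graph $\Phi(G)=G\,\triangle\,\{e\}$ differing by exactly one edge, so that the two contributions cancel. The whole sum then collapses to the contribution of the fixed points. Guided by the remark that $2n-3$ edges play a distinguished role, I expect every fixed point to be a two-connected graph with exactly $2n-3$ edges. Because $2n-3$ is odd, each such graph carries weight $(-1)^{2n-3}=-1$, so the sum equals $-1$ times the number of fixed points; the target identity is then equivalent to the statement that there are exactly $(n-2)!$ fixed points.

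The core of the argument is the definition of the edge $e=e(G)$ to be toggled, and this is where the main difficulty lies. Adding an edge can never destroy two-connectivity, but deleting one can: removing an edge of a cycle produces a path. Thus the naive involution ``take the symmetric difference with a fixed edge'' fails to respect the class $\mathcal{B}[n]$, exactly as flagged in the introduction. The remedy is to fix a total order on the edge set $[n]^{(2)}$ and to select $e(G)$ as a canonically determined \emph{togglable} edge: a present edge whose deletion leaves the graph two-connected, or an absent edge whose addition is admissible, chosen minimally subject to a rule that its toggling does not alter the togglable status of any smaller edge. This ``activity'' condition, in the sense of Bj\"{o}rner and Sokal, is precisely what forces $e(G)=e(\Phi(G))$ and hence makes $\Phi$ an involution on $\mathcal{B}[n]$ rather than a map leaving the class. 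Verifying that such a canonical edge exists for every non-fixed graph, that deletion along it preserves two-connectivity, and that the selection is genuinely self-inverse is the technical heart of the proof.

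It then remains to identify the fixed points—the graphs admitting no togglable edge—and to count them. A fixed point should be saturated from below, in that no edge may be removed without breaking two-connectivity, yet no smaller absent edge may be added under the rule. I expect these to be built by a sequential construction in which, after starting from the edge $\{1,2\}$, each successive vertex $k=3,\dots,n$ is attached by two edges in a way offering exactly $k-2$ admissible choices, yielding $\prod_{k=3}^{n}(k-2)=(n-2)!$ graphs, each with $1+2(n-2)=2n-3$ edges. This matches the required edge count and sign, and can be checked directly for small $n$: for $n=4$ the full alternating sum is $1-6+3=-2$, and the two surviving graphs are two of the six copies of $K_4$ with an edge deleted, in agreement with $(n-2)!=2$.

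The step I expect to be the principal obstacle is the simultaneous demand on $\Phi$: that the canonical toggle preserve two-connectivity under \emph{deletion} while remaining consistent between $G$ and $\Phi(G)$. Obtaining a clean, order-theoretic characterisation of admissible deletions—an analogue of internal activity adapted to the two-connectivity constraint rather than to spanning-tree or matroid bases—is what makes the two-connected case substantially harder than Bernardi's connected case, and it is also what should pin down the fixed-point family and its enumeration by $(n-2)!$.
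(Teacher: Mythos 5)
Your overall strategy---a sign-reversing involution whose fixed points all have $2n-3$ edges and number $(n-2)!$---is exactly the shape of the paper's argument, and your $n=4$ check and sign bookkeeping are correct. But there is a genuine gap at the decisive step, and you have flagged it yourself without closing it: you never actually construct the involution. Your proposed selection rule (``a canonically determined togglable edge \ldots chosen minimally subject to a rule that its toggling does not alter the togglable status of any smaller edge'') is a specification of the properties the rule must have, not a definition, and it is far from clear that such a rule exists: a greedy ``minimal admissible toggle'' is precisely the kind of choice that typically fails to satisfy $e(G)=e(\Phi(G))$, because removing or adding an edge changes which other deletions preserve two-connectivity. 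Likewise your description of the fixed points as a sequential attachment with $k-2$ choices is a numerically consistent guess, but nothing in your argument forces the fixed points to be these graphs or to have $2n-3$ edges. Since the entire identity reduces to the fixed-point set, a proof cannot leave this open.

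The paper resolves the difficulty not by an internal-activity theory for two-connectivity (which, as Section \ref{sec:extend} suggests, is not currently available) but by a two-case split on the vertex $n$. If $n$ is adjacent to every other vertex, deleting $n$ gives a bijection with connected graphs on $[n-1]$, and one simply transports Bernardi's involution for the connected case; its fixed points are the increasing trees on $[n-1]$, which with the $n-1$ edges to $n$ give $(n-2)!$ graphs on $2n-3$ edges. If $n$ is not adjacent to every vertex, Whitney's theorem produces, from any $a\notin N_G(n)$, two internally disjoint paths $a\to n$ meeting $N_G(n)$ once each; the pair of meeting points defines a \emph{permissible} edge, which is a chord of a cycle, so toggling it preserves two-connectivity in both directions. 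Crucially, the set of permissible edges depends only on the graph outside $N_G(n)^{(2)}$, so it is unchanged by the toggle, which is what makes the choice of the lexicographically largest permissible edge self-inverse---and this case contributes no fixed points at all. That structural observation (stability of the permissible-edge set under the toggle) is the missing idea your proposal would need to supply.
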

This is proved through an involution $\Psi$, given in Section \ref{sec:1PHCinv}, which effectively pairs graphs differing by only one edge, leaving some small collection of graphs fixed, which give the $(n-2)!$ factor. 

The fixed graphs are formed from an increasing tree on the vertex set $[n-1]$ with the vertex $n$ adjacent to every other vetex. The number of increasing trees on $[n-1]$ is $(n-2)!$. The tree has $n-2$ edges and we add $n-1$ edges from the vertex labelled $n$ to achieve $2n-3$ edges. This gives the definite minus sign and the combinatorial factor. 

\begin{definition}
An increasing tree is a labelled tree on which the sequence of vertex labels along all paths from the vertex labelled $1$ to the leaves form increasing sequences. An example of such a graph is shown in Figure \ref{fig:increasingtree}.
\end{definition}

\begin{figure}[H]
\includegraphics[scale=0.2]{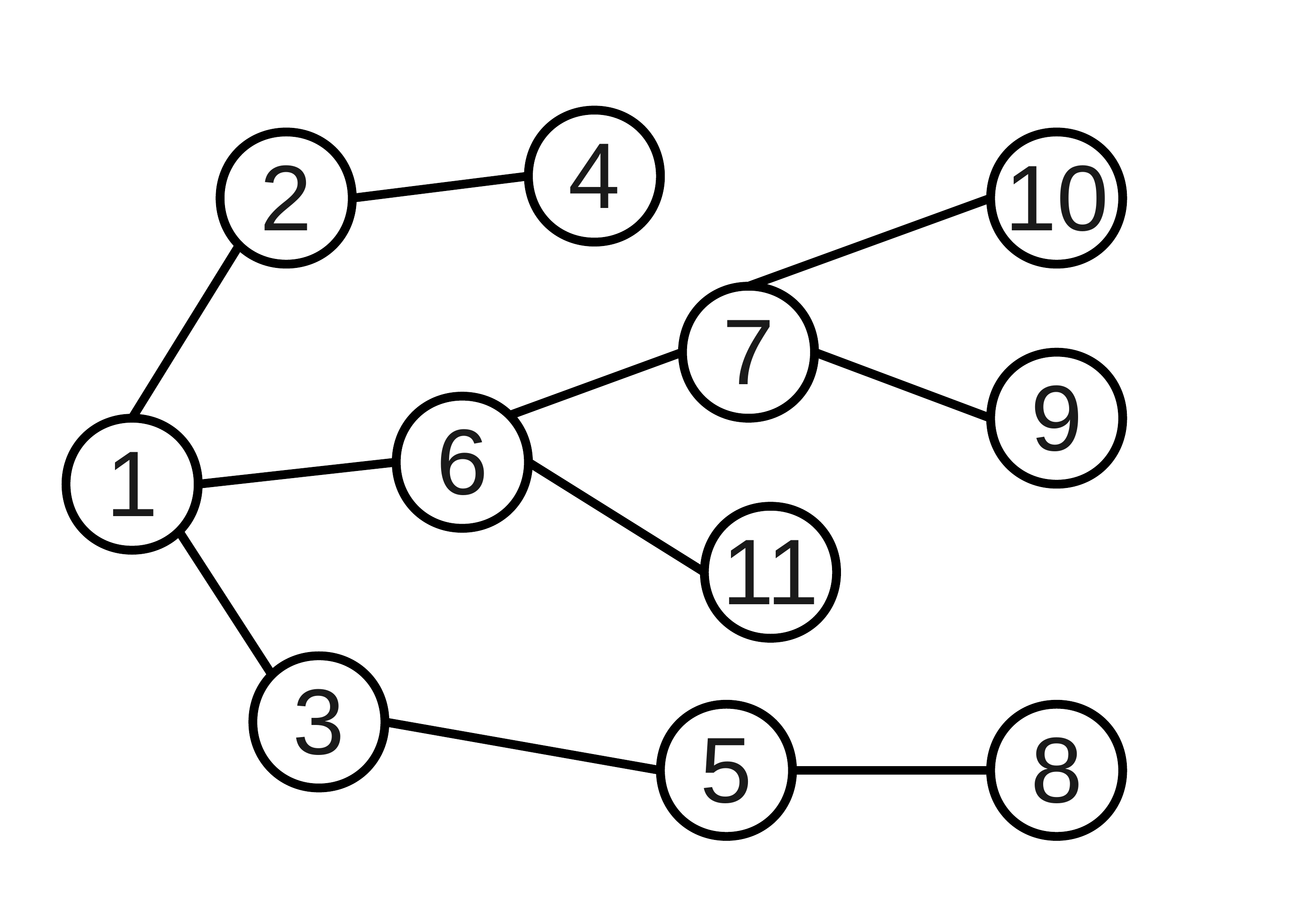} %eps1
\caption{An Increasing Tree on 11 vertices}
\label{fig:increasingtree}
\end{figure}
 
\begin{theorem}[Combinatorial Identity from the continuum hardcore gas]
\label{thm:CHCG}
When we add the polytope weights to the alternating graph sum, we achieve the following identity:
\be \sum\limits_{G \in \mathcal{B}[n]}(-1)^{e(G)}\text{Vol}(\Pi_G)=-n(n-2)! .\ee
\end{theorem}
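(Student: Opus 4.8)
\section*{Proof proposal for Theorem \ref{thm:CHCG}}

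The plan is to use Lass's simplicial decomposition to reduce the volume-weighted sum to a family of sign-counting problems of exactly the kind already solved in Theorem \ref{thm:1PHC}, and then to reassemble the answers. The guiding principle, consistent with the remark that this is ``an extension of the previous case,'' is that once each polytope $\Pi_G$ is cut into unimodular pieces, the single edge toggled by the involution $\Psi$ becomes a \emph{redundant} constraint on each individual piece, so the two graphs paired by $\Psi$ share the piece and cancel inside it. Concretely, I would triangulate the ambient region of $\mathbb{R}^{n-1}$ (cut by the hyperplanes $x_i=x_j$ and $x_i=0$, using gap coordinates $u_k=x_{\sigma_{k+1}}-x_{\sigma_k}$ within each chamber) into unimodular simplices $\Delta_\sigma$, each of Euclidean volume $1/(n-1)!$, indexed by a combinatorial datum $\sigma$. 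The essential output required from the construction is a \emph{compatibility characterisation}: writing $S(\sigma)\subseteq[n]^{(2)}$ for the set of pairs $\{i,j\}$ whose constraint $|x_i-x_j|<1$ holds on all of $\Delta_\sigma$, one should have $\Delta_\sigma\subseteq\Pi_G$ if and only if $E(G)\subseteq S(\sigma)$. Granting this, $\vol(\Pi_G)=\#\{\sigma: E(G)\subseteq S(\sigma)\}/(n-1)!$.

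Next I would exchange the order of summation. Writing $H_\sigma:=([n],S(\sigma))$, this gives
\[
\sum_{G\in\mathcal{B}[n]}(-1)^{e(G)}\vol(\Pi_G)=\frac{1}{(n-1)!}\sum_\sigma A(\sigma),\qquad A(\sigma):=\sum_{\substack{G\in\mathcal{B}[n]\\ E(G)\subseteq S(\sigma)}}(-1)^{e(G)}.
\]
Now $A(\sigma)$ is precisely a one-particle-hard-core alternating sum, but taken over the two-connected subgraphs of $H_\sigma$ rather than of the complete graph. I would therefore run the involution $\Psi$ of Section \ref{sec:1PHCinv} on these subgraphs, constraining the toggled edge to lie in $S(\sigma)$ (adding a pair of $S(\sigma)$ not yet present, or deleting a present one, always preserving two-connectedness). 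Every non-fixed graph then cancels in pairs, so $A(\sigma)$ equals the signed count of the $\Psi$-fixed two-connected subgraphs of $H_\sigma$. By Theorem \ref{thm:1PHC} these fixed graphs are the increasing trees on $[n-1]$ with the vertex $n$ joined to all others; each has $2n-3$ edges, hence contributes sign $(-1)^{2n-3}=-1$.

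Finally I would assemble the contributions by counting, over all simplices, how often each fixed graph appears. The claim to establish is that each of the $(n-2)!$ fixed graphs $G_0$ arises as a $\Psi$-fixed subgraph of $H_\sigma$ for exactly $n!$ of the simplices $\sigma$, one for each linear ordering of the points. Granting this, the signed fixed-point total is $-(n-2)!\,n!$, and dividing by $(n-1)!$ yields $-(n-2)!\,n=-n(n-2)!$, as asserted; the extra factor $n$ relative to the discrete identity is exactly the ratio $n!/(n-1)!$ of the number of chamber orderings to the unimodular normalisation. As a check, for $n=3$ the unique fixed graph is $K_3$, it is $\Psi$-fixed on all $3!=6$ simplices of $\Pi_{K_3}$, and $-6/2!=-3=-n(n-2)!$.

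The main obstacle is the decomposition step together with the enumeration. Establishing unimodularity of the pieces and, above all, the clean equivalence $\Delta_\sigma\subseteq\Pi_G\iff E(G)\subseteq S(\sigma)$ requires identifying exactly which distance constraints are forced on a given simplex, and the triangulation must be chosen so that on each simplex the satisfied constraints $S(\sigma)$ form a graph on which $\Psi$ behaves as in the complete-graph case. The second delicate point is proving that a fixed graph $G_0$ is essential on exactly $n!$ simplices, neither under- nor over-counted; this is precisely where the factor distinguishing this identity from Theorem \ref{thm:1PHC} is produced, and where the geometry of the polytopes $\Pi_{G_0}$ must be controlled.
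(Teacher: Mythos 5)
Your overall strategy is the right one and is essentially the paper's: decompose each $\Pi_G$ into unimodular simplices of volume $1/(n-1)!$ (Lass's decomposition, with simplices $\pi(\mathbf{h},\sigma)$ indexed by the integer parts $\mathbf{h}\in\mathbb{Z}^{n-1}$ and the ordering $\sigma$ of the fractional parts), characterise containment by an edge condition, exchange the order of summation, and run a discrete-type involution inside each piece. But the two steps you flag as delicate are exactly where the proposal breaks. First, you cannot run the involution $\Psi$ of Section \ref{sec:1PHCinv} while merely ``constraining the toggled edge to lie in $S(\sigma)$'': $\Psi$ selects one specific edge ($\eta_G$, or the lexicographically maximal $(G,n)$-permissible edge $\varepsilon_G$), defined relative to the apex vertex $n$ and the standard order, and on a general simplex that edge may be forbidden. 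Restricting to allowed edges destroys either the involution property (the selected edge for $G\oplus e$ need no longer be $e$) or the fixed-point analysis (graphs whose designated edge is forbidden become spurious fixed points). The paper's resolution is a simplex-dependent relabelling $\xi_{\mathbf{h}}$: the apex is chosen as the vertex with the largest centroid coordinate $\bar h$, so that every edge between two of its neighbours is automatically allowed (Lemma \ref{lem:permisallowed}), hence every permissible edge is allowed and the involution runs unmodified; a further lemma then shows fixed points can occur only when \emph{no} edge is forbidden, which forces $\mathbf{h}$ to be one of exactly $n$ special vectors of the form $(0,\dots,0,-1,\dots,-1)$. That is where the factor $n$ comes from.

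Second, your counting claim --- that each of the $(n-2)!$ discrete fixed graphs $G_0$ is a fixed subgraph of $H_\sigma$ for exactly $n!$ simplices --- is false, and with it the final assembly. The number of simplices containing $G_0$ is $(n-1)!\vol(\Pi_{G_0})$, and already for $n=4$, where $G_0$ is $K_4$ minus one edge, a direct computation gives $\vol(\Pi_{G_0})=\int_{-1}^{1}(2-|t|)^2\,dt=14/3$, so $G_0$ lies in $28$ simplices, not $n!=24$; your accounting would then produce $-2\cdot 28/3!=-28/3$ rather than $-8=-n(n-2)!$. The correct factorisation of the $n!\,(n-2)!$ signed incidences is the other way around: there are $n\,(n-2)!$ \emph{distinct} fixed graphs --- for each of the $n$ admissible vectors $\mathbf{h}$, the $(n-2)!$ increasing trees in the $\xi_{\mathbf{h}}$-reordered labels with apex $\xi_{\mathbf{h}}(n)$ joined to all other vertices (pairwise distinct for $n\ge 5$, as the paper checks) --- and each arises as a fixed point only for the simplices built on its own $\mathbf{h}$, not for every simplex whose polytope happens to contain it. So the extra factor $n$ relative to Theorem \ref{thm:1PHC} is the number of admissible integer vectors $\mathbf{h}$, not the ratio $n!/(n-1)!$ of orderings.
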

This is proved through a collection of involutions $(\Psi_{\mathbf{h}})_{\mathbf{h} \in \mathbb{Z}^{n-1}}$. The index $\mathbf{h}$ is related to the partition of the polytopes into areas of equal volume attributed to Lass in \cite{B08, DLL07}. The meaning of $\mathbf{h}$ is explained in subsection \ref{subsec:lasspolytope}. The fixed points of these involutions occur only when $\mathbf{h}$ is of the form $(0,\cdots,0,-1, \cdots, -1)$, meaning that any edge is possible. There are precisely $n$ possibilities of these sequences, which corresponds to the $n$ positions of the last zero. 

The particular $\mathbf{h}$ provides a bijection $\sigma:[n] \to [n]$ on which the fixed graphs correspond to an increasing tree (given by the order $\sigma(i)<\sigma(j)$ if and only if  $i<j$) on the labels $\{\sigma(1), \cdots, \sigma(n-1)\}$. This is paired with every edge from $\sigma(n)$ to the vertices $\{\sigma(1), \cdots, \sigma(n-1)\}$. 

The number of these increasing trees on $n-1$ vertices is $(n-2)!$ and hence we obtain the factor $n(n-2)!$. We notice that these graphs are on $2n-3$ edges as above, which provides the minus sign. 

\begin{remark}[Complications for two-connected graphs]
The two-connected case is necessarily more complicated than the connected case. First of all, minimal two-connected graphs do not all have the same number of edges for a fixed number of vertices as trees (minimal connected graphs) do. Simply removing edges appropriately down to a minimal graph cannot provide a combinatorial understanding as there will still be sign differences to take care of. Furthermore, the sign of the factor is constant - the number of edges must always be odd for whatever value of $n$ we take. 
\end{remark}

\section{The Hardcore One Particle Gas - Proof of Theorem \ref{thm:1PHC}}
\label{sec:1PHCinv}
As indicated in the introduction, the proof of Theorem \ref{thm:1PHC} is done through an involution. To explain how the involution $\Psi$ provides the combinatorial factor through the number of fixed points, we use the manipulations of Bernardi \cite{B08}, where we know that the involution either adds an edge, removes an edge or leaves the graph fixed. We have that:
\be \sum\limits_{g \in \mathcal{B}[n]}(-1)^{e(g)}=\sum\limits_{g \in \mathcal{B}[n]}(-1)^{e(\Psi(g))}, \ee
since $\Psi$ is a bijection. The sum of these is therefore:
\begin{align} 2\sum\limits_{g \in \mathcal{B}[n]}(-1)^{e(g)}&=\sum\limits_{g \in \mathcal{B}[n]}((-1)^{e(g)}+(-1)^{e(\Psi(g))}) \notag \\ 
&= 2\sum\limits_{g \in \mathcal{B}[n] \vert \Psi(g)=g}(-1)^{e(g)} . \end{align}
The fixed points of the involution thus give us the combinatorial factor. 

This section describes the involution and proves it does what is required.

For graphs, the analogous operation to symmetric difference explained in the introduction is the operation $\oplus$. $G \oplus e$ is the graph $(V(G),E(G) \Delta \{ e\})$. 

The specific task of the proof of both identities is to identify for each graph a unique edge that we can add or remove. This has to be done in a consistent and efficient manner. Consistent in the sense that if we identify $e_G$ as the unique edge in $G$, then we want $e_{G \oplus e_G}=e_G$ so that $\Psi$ is an involution. It needs to be efficient in the sense that the only graphs it leaves fixed are those that provide the combinatorial factor relevant for the alternating sum. We do not want further cancellations to consider.

In each graph $G$, we consider the vertex labelled $n$.  When the vertex $n$ is adjacent to every vertex, we realise that the collection of two-connected graphs with this property may be identified with the collection of connected graphs on the vertex set $[n-1]$. Bernardi \cite{B08} has already provided an involution on this set that we can use in this case to obtain cancellations, since they will all come with the same prefactor $(-1)^{n-1}$ from the $n-1$ edges from the vertex labelled $n$. We thus firstly introduce the involution of Bernardi and make rigorous the connection between connected graphs on $[n-1]$ and the particular subset of two-connected graphs where $n$ is adjacent to every other vertex. 

For those graphs where the vertex labelled $n$ is not adjacent to every other vetex, we may use the two-connected property of the graph to find an edge suitable for the involution. This is done through using a corollary due to Whitney of Menger's theorem and introducing a definition of permissible edges. We emphasise how these combine to give a complete involution and that the only contributions arise from the Bernardi involution.

Firstly, we define the neighbourhood of a vertex $i$ in a graph $G$. 

\begin{definition}[Neighbourhood]
For a graph $G$ and a vertex $i$, we define the neighbourhood of $i$ in $G$ as $N_G(i):=\{j \in V(G) \vert \, \{i,j\} \in E(G) \}$.
\end{definition}

We define the \emph{lexicographic order} on edges $e \in [n]^{(2)}$ by:
\be \{i,j\} < \{k,l\} \text{ if } \begin{cases} \; \; \; \; \; \min\{i,j\} < \min\{k,l\} \\
\text{or }  \min\{i,j\} = \min\{k,l\} \text{ and } \max\{i,j\} < \max\{k,l\} \end{cases} . \notag \ee

For a subset $S$ of a totally ordered set, we define $S^{>e}:=\{x \in S \vert \, x>e\}$. For a graph $G=(V(G),E(G))$ and an edge $e$, we define $G^{>e}:=(V(G),E(G)^{>e})$ with respect to the lexicographic order above.

We give here Bernardi's involution on connected graphs, since it used for the two-connected graph version. We write it for the vertex set $[n-1]$ as this is the form in which it will be used.

\begin{definition}[Externally Active Edge]
An edge $e \in [n-1]^{(2)}$ is externally active for the graph $G \in \mathcal{C}[n-1]$, if there is a path in $G^{>e}$ between the endpoints of $e$.
\end{definition}

If a connected graph $G$ has an externally active edge, we define $\eta_G$ to be the maximal such edge.

\begin{definition}[Bernardi's Involution \cite{B08}]
The involution $\Psi_B: \mathcal{C}[n-1] \to \mathcal{C}[n-1]$ defined by Bernardi \cite{B08}, is given by:
\be \Psi_B:G \mapsto \begin{cases} G \oplus \eta_G &\text{ if } G \text{ has an externally active edge} \\ G &\text{ otherwise} \end{cases} .\ee
\end{definition}

The result of the involution is the following lemma.

\begin{lemma}[Bernardi \cite{B08}]
Under the involution $\Psi_B$, only increasing trees are kept fixed.
\end{lemma}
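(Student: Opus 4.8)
The plan is to characterize exactly when Bernardi's involution $\Psi_B$ leaves a connected graph $G \in \mathcal{C}[n-1]$ fixed, and show this happens precisely for increasing trees. A graph is fixed iff it has no externally active edge, so the core of the argument is the equivalence: $G$ has no externally active edge $\iff$ $G$ is an increasing tree. I would prove the two directions separately.

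First I would show that an increasing tree has no externally active edge. Take an increasing tree $G$ and any edge $e = \{i,j\}$ with $i < j$. I must argue there is no path in $G^{>e}$ between $i$ and $j$. Since $G$ is a tree, the only path between $i$ and $j$ in $G$ is the edge $e$ itself, so in $G^{>e}$ (which deletes $e$ and all smaller edges) the vertices $i$ and $j$ lie in different components. Thus no edge is externally active and $G$ is fixed. Here the increasing structure is what guarantees the tree is connected in the right way; I would lean on the fact that a tree has a unique path between any two vertices, so deleting an edge disconnects its endpoints.

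For the converse, I would show that if $G$ has no externally active edge then $G$ must be an increasing tree. The strategy is first to argue $G$ is a tree: if $G$ contained a cycle, then the maximal edge $e$ on that cycle would have a path between its endpoints using the other (smaller) edges of the cycle, all lying in $G^{>e}$, making $e$ externally active---a contradiction. So $G$ has no cycles, and being connected it is a tree with exactly $n-2$ edges. Then I would verify the tree is \emph{increasing}: supposing a path from vertex $1$ were not monotone, one locates a local maximum along some path, giving two edges incident to a common vertex whose configuration forces external activity of the larger edge, again contradicting the fixed-point hypothesis.

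The main obstacle I expect is the second half of the converse---pinning down the increasing property from the absence of externally active edges, rather than merely establishing that $G$ is a tree. The tree claim follows cleanly from the cycle argument, but translating ``no externally active edge'' into the full path-monotonicity condition requires carefully tracking the lexicographic order against the vertex labels and showing that any violation of monotonicity produces a path in the appropriate $G^{>e}$. A clean way to handle this is to count: since increasing trees number $(n-2)!$ and trees in general are more numerous, I would instead argue directly that each tree fixed by $\Psi_B$ is forced to have increasing paths by examining, for each non-leaf vertex, the edge joining it to its parent and showing any decrease along a root-to-leaf path creates a connecting path among the larger edges. Since this is Bernardi's result quoted as a lemma, I would ultimately cite \cite{B08} for the detailed verification while sketching the mechanism above.
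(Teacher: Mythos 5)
Your proposal has a genuine gap in the forward direction (increasing tree $\Rightarrow$ fixed). In the paper's definition, an externally active edge is \emph{any} $e\in[n-1]^{(2)}$ --- not only an edge of $G$ --- such that $G^{>e}$ contains a path between the endpoints of $e$; this is what allows $\Psi_B$ to \emph{add} edges as well as remove them. Your argument (``since $G$ is a tree, the only path between $i$ and $j$ in $G$ is the edge $e$ itself'') only applies when $e\in E(G)$, and for such $e$ it shows no tree edge is ever externally active --- but that holds for \emph{every} tree, increasing or not. If your argument were complete it would prove that all $(n-1)^{n-3}$ trees are fixed, contradicting the count $(n-2)!$. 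The increasing property enters precisely for non-edges $e=\{i,j\}\notin E(G)$, $i<j$: one must show that the unique tree path from $i$ to $j$ always uses some edge $\le e$ in lexicographic order (either it passes through a vertex smaller than $i$, or, if $i$ is an ancestor of $j$, its first edge $\{i,v_1\}$ has $v_1<j$). That is the step your sketch omits entirely, and your remark that ``the increasing structure is what guarantees the tree is connected in the right way'' does not supply it.

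There is also a sign error in your converse: you claim the \emph{maximal} edge $e$ of a cycle is externally active because the other, smaller, edges lie in $G^{>e}$ --- but $G^{>e}$ retains only edges \emph{strictly greater} than $e$, so smaller edges are excluded. The correct statement is that the \emph{minimal} edge of any cycle is externally active, since the remainder of the cycle is a path consisting of larger edges. Finally, the part of the converse that upgrades ``$G$ is a tree'' to ``$G$ is an increasing tree'' is left as a vague sketch and ultimately deferred to the citation; note that the paper itself states this lemma without proof, citing Bernardi, so deferring is defensible --- but then the partial arguments you do give should at least be correct, and as written two of them are not.
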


We introduce the following notation to simplify the formulation of the connection between two-connected graphs with vertex set $[n]$, where the vertex labelled $n$ is adjacent to every other vertex, and connected graphs with vertex set $[n-1]$.

\begin{itemize}
\item[i)] For a graph $G=(V(G),E(G))$, we denote by $G\setminus \{i\}$, \newline the graph $(V(G)\setminus \{i\},E(G) \setminus \{\{i,j\} \vert \, j \in V(G)\})$.
\item[ii)] We denote the subset of two-connected graphs on vertex set $[n]$ with vertex $n$ adjacent to all other vertices by $\mathcal{B}^{\Delta}[n]$.
\end{itemize}

\begin{definition}
We define the mapping $\zeta: \mathcal{B}^{\Delta}[n] \to \mathcal{C}[n-1]$, by $\zeta: G \mapsto G \setminus \{n\}$. We emphasise that removing a vertex and its incident edges from a two-connected graph leaves a connected graph and so defining the codomain of $\zeta$ as $\mathcal{C}[n-1]$ is fine.
\end{definition}
\begin{lemma}
The map $\zeta: \mathcal{B}^{\Delta}[n] \to \mathcal{C}[n-1]$ is a bijection.
\end{lemma}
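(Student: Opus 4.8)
The plan is to prove bijectivity by exhibiting an explicit two-sided inverse rather than checking injectivity and surjectivity separately. Given a connected graph $H \in \mathcal{C}[n-1]$, I would define $\xi(H)$ to be the graph on vertex set $[n]$ obtained by adjoining the new vertex $n$ and joining it to every vertex of $[n-1]$; that is, $E(\xi(H)) = E(H) \cup \{\{i,n\} \mid i \in [n-1]\}$. By construction the vertex $n$ is adjacent to all other vertices, so once $\xi$ is shown to land in $\mathcal{B}^{\Delta}[n]$ it will be the evident candidate inverse, and the remaining work is to verify the two compositions.

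I expect the verification that $\xi(H)$ is two-connected to be the main (if mild) obstacle, and I would argue it directly from the definition by deleting each vertex in turn. Deleting the vertex $n$ together with its incident edges returns exactly $H$, which is connected by hypothesis. Deleting any vertex $i \in [n-1]$ leaves $n$ still adjacent to every remaining vertex of $[n-1] \setminus \{i\}$, so the resulting graph is connected, every remaining vertex reaching $n$ in one step. Hence removing any single vertex keeps the graph connected, so $\xi(H) \in \mathcal{B}^{\Delta}[n]$ and $\xi$ is a well-defined map $\mathcal{C}[n-1] \to \mathcal{B}^{\Delta}[n]$.

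Finally I would check the two compositions. For $H \in \mathcal{C}[n-1]$, the graph $\zeta(\xi(H))$ is obtained by first adjoining $n$ with all its edges and then removing $n$ with all its incident edges, which recovers $H$ exactly, so $\zeta \circ \xi = \mathrm{id}$. Conversely, for $G \in \mathcal{B}^{\Delta}[n]$ the edge set splits as $E(G) = E(G \setminus \{n\}) \cup \{\{i,n\} \mid i \in [n-1]\}$, precisely because $n$ is adjacent to every other vertex; applying $\xi$ to $\zeta(G) = G \setminus \{n\}$ re-adjoins exactly the second block and returns $G$, so $\xi \circ \zeta = \mathrm{id}$. Since $\zeta$ admits a two-sided inverse, it is a bijection. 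The only place where the hypotheses are genuinely used is the two-connectivity step, where connectivity of $H$ feeds the deletion of $n$ and the universality of $n$ feeds the deletion of each $i \in [n-1]$; this is where I would concentrate the care in the full write-up.
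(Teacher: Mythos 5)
Your proposal is correct and is essentially the same argument as the paper's: the paper verifies injectivity (the edge sets at $n$ are forced) and surjectivity (adjoining a universal vertex $n$ to a connected graph yields a two-connected graph, checked by deleting each vertex in turn), which is exactly the content of your two composition identities plus the well-definedness of your inverse map. The only cosmetic difference is that the paper names the inverse $\mu$ rather than $\xi$ (a name it later reserves for the relabelling bijections $\xi_{\mathbf{h}}$).
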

\begin{proof}
Firstly it is injective. If $\zeta(G)=\zeta(H)$, this means $E(G)\cap [n-1]^{(2)}=E(H) \cap [n-1]^{(2)}$ and since $G$ and $H \in \mathcal{B}^{\Delta}[n]$, the remaining elements of $E(G)$ and $E(H)$, namely $\{\{i,n\} \vert \, i \in [n-1]\}$, are the same and so $G=H$. This is surjective, since for any connected graph on $[n-1]$, if we add the vertex labelled $n$ and all edges $\{i.n\}$ such that $i \in [n-1]$, the resulting graph is two-connected. If we consider removing any vertex $i \neq n$ from this new graph we see that every vertex is connected to every other vertex via $n$. If $n$ is removed then it is connected by definition and hence it is two-connected.
\end{proof}

We define the inverse map of $\zeta$ to be $\mu$.

\begin{definition}[Internally Disjoint Paths]
A path is an alternating sequence of vertices and edges in a graph $v_0 e_0 v_1 \cdots e_{k-1} v_k$, which begins and ends with a vertex. The edges are written in terms of the preceding and following vertices: $e_i=\{v_i,v_{i+1}\}$. Two paths $v_0 e_0 v_1 \cdots e_{k-1} v_k$ and $\tilde{v}_0 \tilde{e}_0 \tilde{v}_1 \cdots \tilde{e}_{l-1} \tilde{v}_l$ are internally disjoint if the only common vertices or edges are the endpoints $v_0=\tilde{v}_0$ and $v_k = \tilde{v}_l$.
\end{definition}

For an edge $\{i,j\}$ the endpoints are defined as the vertices $i$ and $j$. 

The following result in the case $k=2$ is used to find an edge in each graph where $N_G(n) \neq [n-1]$, by using the fact that we have two internally disjoint paths between $n$ and some $a \in [n-1] \setminus N_G(n)$. This is a classical theorem of Whitney \cite{W32} based upon Menger's Theorem. 

\begin{theorem}[Whitney \cite{W32}]
\label{thm:whitney}
A graph $G$ is $k$-connected if and only if every pair of vertices is connected by $k$ internally disjoint paths.
\end{theorem}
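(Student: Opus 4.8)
The plan is to derive Whitney's theorem from the \emph{local} (vertex) form of Menger's theorem, which I take as the classical engine: for any two non-adjacent vertices $u,v$ of a graph, the maximum number of internally disjoint $u$--$v$ paths equals the minimum size of a vertex set, disjoint from $\{u,v\}$, whose deletion separates $u$ from $v$. Everything else is organised around this local statement. First I would dispatch the easy implication. Suppose every pair of vertices is joined by $k$ internally disjoint paths. To separate two fixed vertices $u$ and $v$ one must destroy each of the $k$ paths, and since they share no internal vertices, this forces the deletion of at least one distinct internal vertex per path; hence no set of fewer than $k$ vertices disconnects $G$, so $G$ is $k$-connected.

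For the converse, assume $G$ is $k$-connected and fix distinct $u,v$. If $u$ and $v$ are non-adjacent, the conclusion is immediate from Menger: every $u$--$v$ separator has size at least $k$, since otherwise its deletion would disconnect $G$, and therefore there are at least $k$ internally disjoint $u$--$v$ paths. The substantive case is when $u$ and $v$ are adjacent, precisely because the local Menger theorem is phrased for non-adjacent pairs. Here I would pass to the edge-deleted graph $H:=G\oplus e$, where $e=\{u,v\}$, in which $u$ and $v$ are non-adjacent, and show that $H$ already contains $k-1$ internally disjoint $u$--$v$ paths; adjoining the single-edge path along $e$ then produces $k$ internally disjoint paths in $G$.

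The key step, and the main obstacle, is to prove that no vertex set $S$ with $|S|\le k-2$ separates $u$ from $v$ in $H$. I would argue by contradiction. Suppose such an $S$ exists. Since $G$ is $k$-connected it has at least $k+1$ vertices, and $|S\cup\{u,v\}|\le k$, so there is a vertex $w\notin S\cup\{u,v\}$; say $w$ lies in a component of $H-S$ different from that of $v$ (if instead $w$ shares $v$'s component, then it lies in a component different from $u$'s and one argues symmetrically with the roles of $u$ and $v$ exchanged). Now consider any $w$--$v$ path in $G$: if it avoids the edge $e$ it is a path in $H$, which must meet $S$ because $w$ and $v$ lie in distinct components of $H-S$; if it uses $e=\{u,v\}$ it passes through $u$. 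In either case the path meets $S\cup\{u\}$, so $S\cup\{u\}$ separates $w$ from $v$ in $G$. But $w,v\notin S\cup\{u\}$ and $|S\cup\{u\}|\le k-1$, contradicting the $k$-connectivity of $G$.

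Hence every $u$--$v$ separator in $H$ has size at least $k-1$, and Menger supplies the required $k-1$ internally disjoint $u$--$v$ paths in $H$, none of which uses $e$; appending the one-edge path along $e$ completes the adjacent case. The only points I would check with care are the bookkeeping that guarantees the internal vertices stay genuinely disjoint when the edge-path is appended (the appended path has no internal vertices, so this is immediate), and the small boundary instance $|V(G)|=k+1$, where $k$-connectivity forces $G=K_{k+1}$ and the $k$ internally disjoint paths between any two vertices can be exhibited directly as the connecting edge together with one two-edge path through each of the remaining $k-1$ vertices. I expect the adjacency reduction via $G\oplus e$ to be where the real content lies, the remainder being routine once Menger is in hand.
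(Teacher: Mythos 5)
The paper does not actually prove this statement: it is quoted as a classical theorem of Whitney, with only the remark that it is ``based upon Menger's Theorem'' and the citation to \cite{W32}, and only the case $k=2$ is ever used (to extract a $(G,n)$-permissible edge). So there is no in-paper argument to compare yours against. Your proposal is the standard, correct derivation of the global statement from the local vertex form of Menger's theorem, and the adjacency reduction via $H=G\oplus e$ --- showing that a $u$--$v$ separator $S$ of size at most $k-2$ in $H$ would make $S\cup\{u\}$ a separator of size at most $k-1$ in $G$ between $v$ and an auxiliary vertex $w\notin S\cup\{u,v\}$ --- is exactly the textbook argument and is carried out correctly; this reduction is indeed where the content lies. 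Two small remarks. In the easy direction you assert that each of the $k$ internally disjoint paths contributes a distinct internal vertex to any separating set; if one of the paths is the single edge $\{u,v\}$ it has no internal vertices, but in that case $u$ and $v$ cannot be separated by vertex deletion at all, so the required bound holds vacuously --- this deserves a sentence. And the separate treatment of $|V(G)|=k+1$ is superfluous: your inequality $|S\cup\{u,v\}|\le k<k+1\le |V(G)|$ already produces the auxiliary vertex $w$ in that case, so the main argument covers it without appeal to $G=K_{k+1}$.
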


We introduce the notion of permissible edges as those edges which have both endpoints in the neighbourhood of a vertex $i$ and can easily be understood as a chord in the graph $G$, when we neglect any edges in $(N_G(i))^{(2)}$. We will focus on the case when $i=n$.

\begin{definition}[Permissible edges]
Given a (two-connected) graph $G$ and a vertex $i \in V(G)$, such that $N_G(i)\cup \{i\} \neq V(G)$, we define an edge $e \in [n]^{(2)}$ to be $(G,i)$-permissible if the following condition holds:

$\bullet$ there exists an $a \in V(G)\setminus (N_G(i) \cup \{i\})$, such that we have two vertex disjoint paths $a \to i$, intersecting each once in $N_G(i)$. The intersection vertices are the endpoints of $e$.
\end{definition}

If a two-connected graph $G$ with $V(G)=[n]$ has a $(G,n)$-permissible edge, then we denote the largest such edge in lexicographical ordering by $\varepsilon_G$.

\begin{lemma}
For every $G \in \mathcal{B}[n] \setminus \mathcal{B}^{\Delta}[n]$, we have a $(G,n)$-permissible edge.
\end{lemma}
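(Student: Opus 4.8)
The plan is to produce the permissible edge directly from a pair of internally disjoint paths supplied by Whitney's theorem. First I would observe that because $G \in \mathcal{B}[n] \setminus \mathcal{B}^{\Delta}[n]$, the vertex $n$ fails to be adjacent to some vertex, so the set $V(G) \setminus (N_G(n) \cup \{n\})$ is non-empty; I fix any $a$ in it. Since $G$ is two-connected, Theorem \ref{thm:whitney} applied with $k=2$ yields two internally disjoint paths $P_1, P_2$ joining $a$ and $n$, that is, paths sharing only the endpoints $a$ and $n$.

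Next I would trim each path so that it meets $N_G(n)$ exactly once. Traversing $P_i$ from $a$, let $b_i$ be the first vertex lying in $N_G(n)$. Such a vertex exists because the penultimate vertex of $P_i$ (the neighbour of $n$ on $P_i$) already belongs to $N_G(n)$; moreover $b_i \neq a$ since $a \notin N_G(n)$, and $b_i \neq n$ since $n \notin N_G(n)$, so $b_i$ is an internal vertex of $P_i$. Writing $Q_i$ for the initial segment of $P_i$ from $a$ up to $b_i$, the choice of $b_i$ as the \emph{first} hit guarantees that $Q_i$ meets $N_G(n)$ only at its endpoint $b_i$. Because $b_i$ lies in $N_G(n)$, the edge $\{b_i, n\}$ is present in $G$, so I may extend $Q_i$ by this edge to a path $\hat{Q}_i : a \to b_i \to n$ (valid since $n$ does not appear in $Q_i$), and $\hat{Q}_i$ still meets $N_G(n)$ exactly once, namely at $b_i$.

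Finally I would check that $e := \{b_1, b_2\}$ is a legitimate $(G,n)$-permissible edge. The vertices $b_1, b_2$ are internal to $P_1, P_2$ respectively, and since these paths are internally disjoint, $b_1 \neq b_2$; thus $e$ is a genuine pair of distinct vertices. The paths $\hat{Q}_1, \hat{Q}_2$ are internally disjoint, because their internal vertices are drawn from the disjoint interiors of $P_1, P_2$ and they share only $a$ and $n$; each meets $N_G(n)$ exactly once; and the two intersection vertices are precisely $b_1$ and $b_2$. This is exactly the witnessing data demanded by the definition of a permissible edge, so $e$ is $(G,n)$-permissible, and in particular $\varepsilon_G$ is well defined.

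The only delicate points are bookkeeping ones: ensuring that each truncated path hits $N_G(n)$ a single time — handled by cutting at the \emph{first} entrance into $N_G(n)$ and then jumping straight to $n$ along $\{b_i,n\}$ — and ensuring $b_1 \neq b_2$, so that $e$ is an edge and $\hat{Q}_1, \hat{Q}_2$ stay internally disjoint, which follows from the internal disjointness of the two Menger paths. I expect the main obstacle to be purely one of careful phrasing rather than mathematical depth, since the geometric content, namely that the two paths enter the neighbourhood of $n$ at two distinct vertices that are joined as a chord, becomes transparent once the paths are cut at their first entrance into $N_G(n)$.
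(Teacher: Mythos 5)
Your argument is correct and follows essentially the same route as the paper: take $a\notin N_G(n)\cup\{n\}$, apply Whitney's theorem to get two internally disjoint $a$--$n$ paths, truncate each at its first entry into $N_G(n)$ and jump directly to $n$, so the two (distinct) entry vertices give the permissible edge. Your version merely spells out the bookkeeping (distinctness of $b_1,b_2$ and preservation of internal disjointness) that the paper leaves implicit.
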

\begin{proof}
We know that $S:=[n-1] \setminus N_G(n) \neq \emptyset$, because we are outside of $\mathcal{B}^{\Delta}[n]$. If we choose some $a \in S$, then we know by Theorem \ref{thm:whitney} we have two internally disjoint paths between the vertices labelled $a$ and $n$. Both paths must hit $N_G(n)$ at some point. When they first hit $N_G(n)$, then they could go straight to $n$ and so each path need only intersect $N_G(n)$ in one place. This provides us with a permissible edge and so $\varepsilon_G$ is well defined for every $G \in \mathcal{B}[n] \setminus \mathcal{B}^{\Delta}[n]$.
\end{proof}

\begin{definition}[Involution $\Psi$]
We define the involution $\Psi: \mathcal{B}[n] \to \mathcal{B}[n]$ through Bernardi's involution $\Psi_B$ and the permissible edge concept.
\begin{itemize}
\item[i)] If $N_G(n)=[n-1]$, we consider the graph $G \setminus \{n\}$. This is a connected graph and we may apply Bernardi's involution to this subgraph and retain the vertex $n$ and its incident edges.

This can be written as $\Psi\large\vert_{\mathcal{B}^{\Delta}[n]}:=\mu \circ \Psi_B \circ \zeta$.

\item[ii)] If $N_G(n) \neq [n-1]$, then we define the involution $\Psi: G \mapsto G \oplus \varepsilon_G$.
\end{itemize}
\end{definition}

The first point to emphasise is that due to the bijection between $\mathcal{B}^{\Delta}[n]$ and $\mathcal{C}[n-1]$, we are able to obtain cancellations for these graphs in the same way as Bernardi. We are left with increasing trees on the set $[n-1]$ and the vertex $n$ adjacent to every other vertex. 

We still need to prove that $\Psi$ is indeed an involution.

\begin{lemma}
$\Psi$ is an involution and its image is contained within $\mathcal{B}[n]$.
\end{lemma}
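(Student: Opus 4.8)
The plan is to verify the two required properties — being an involution and having image in $\mathcal{B}[n]$ — separately on the two blocks $\mathcal{B}^{\Delta}[n]$ and $\mathcal{B}[n]\setminus\mathcal{B}^{\Delta}[n]$ on which $\Psi$ is defined by cases, after first checking that $\Psi$ preserves this partition so that the two cases never interfere. For the second block the decisive structural fact — which I would isolate first as a preliminary observation — is a \emph{locality} property of permissible edges: any pair of internally disjoint paths witnessing $(G,n)$-permissibility meets $N_G(n)$ only at its two terminal vertices, so no edge with both endpoints in $N_G(n)$ is ever traversed by such a witness. Since a permissible edge has both its endpoints in $N_G(n)$, toggling $\varepsilon_G$ neither alters $N_G(n)$ nor uses up or destroys any witness; consequently the sets of $(G,n)$-permissible edges of $G$ and of $G\oplus\varepsilon_G$ coincide.

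First I would dispose of case i). Here $\zeta$ and $\mu$ are mutually inverse bijections between $\mathcal{B}^{\Delta}[n]$ and $\mathcal{C}[n-1]$, and $\Psi_B$ is an involution on $\mathcal{C}[n-1]$, so $\Psi|_{\mathcal{B}^{\Delta}[n]}=\mu\circ\Psi_B\circ\zeta$ squares to the identity; its image lies in $\mathcal{B}^{\Delta}[n]\subseteq\mathcal{B}[n]$, whose members are two-connected by the bijection lemma. In particular $\Psi$ maps $\mathcal{B}^{\Delta}[n]$ into itself. In case ii), because $\varepsilon_G$ has both endpoints in $N_G(n)$, toggling it leaves every edge at $n$ untouched, so $N_{\Psi(G)}(n)=N_G(n)\neq[n-1]$ and $\Psi(G)$ stays in $\mathcal{B}[n]\setminus\mathcal{B}^{\Delta}[n]$; thus the partition is respected. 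The involution identity is then immediate from the locality observation: since the permissible sets of $G$ and $G\oplus\varepsilon_G$ are equal we get $\varepsilon_{G\oplus\varepsilon_G}=\varepsilon_G$, whence $\Psi(\Psi(G))=(G\oplus\varepsilon_G)\oplus\varepsilon_G=G$.

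What remains — and what I expect to be the main obstacle — is showing that the image of case ii) is genuinely two-connected, specifically when $\Psi$ removes the edge $\varepsilon_G=\{u,v\}$ (adding an edge trivially preserves two-connectedness, so all the content is in the removal). Here I would argue that $H:=G\setminus\{u,v\}$ has no cut vertex. A cut vertex $w\notin\{u,v,n\}$ is impossible: $G-w$ is connected, so $H-w=(G-w)\setminus\{u,v\}$ has at most two components, with $u$ and $v$ on opposite sides if it were disconnected, but $n$ survives in $H-w$ and is adjacent to both $u$ and $v$, so it would bridge the two sides — a contradiction. The cases $w=u$ and $w=v$ are vacuous, since deleting $u$ (or $v$) already deletes the edge $\{u,v\}$, giving $H-u=G-u$ and $H-v=G-v$, both connected. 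The only genuine case is $w=n$, and this is exactly where the permissibility witness is used: the two internally disjoint paths from some $a\in V(G)\setminus(N_G(n)\cup\{n\})$ to $u$ and to $v$ have interiors in $V(G)\setminus(N_G(n)\cup\{n\})$, hence avoid both $n$ and the edge $\{u,v\}$, and so keep $u$, $v$ and therefore all of $G-n$ connected after the deletion.

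Assembling these pieces, $\Psi$ preserves the partition of $\mathcal{B}[n]$, is an involution on each block, and sends each block into $\mathcal{B}[n]$; hence $\Psi$ is a well-defined involution on $\mathcal{B}[n]$ whose image is contained in $\mathcal{B}[n]$. The conceptual heart of the proof is the locality observation, which renders the stability $\varepsilon_{G\oplus\varepsilon_G}=\varepsilon_G$ transparent; the heaviest technical step is the cut-vertex analysis establishing two-connectedness under edge removal.
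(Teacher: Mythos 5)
Your proof is correct and follows essentially the same route as the paper's: case i) is handled via the bijection with $\mathcal{C}[n-1]$ and Bernardi's involution, the stability $\varepsilon_{G\oplus\varepsilon_G}=\varepsilon_G$ is derived from the observation that permissibility witnesses never traverse edges with both endpoints in $N_G(n)$, and two-connectedness after edge removal is established by a cut-vertex analysis (the paper treats your $w=n$ and $w\notin\{u,v,n\}$ cases uniformly by noting that $\varepsilon_G$ is a chord of the cycle formed by the two witness paths, so deleting it leaves $u$ and $v$ on a common cycle). Your explicit check that $\Psi$ preserves the partition into $\mathcal{B}^{\Delta}[n]$ and its complement is a point the paper leaves implicit.
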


\begin{proof}
The fact this is true for $\Psi \large\vert_{\mathcal{B}^{\Delta}[n]}$ follows from the proof of Bernardi.

If an edge is permissible, we note that it is a chord in a cycle within the graph $G$. If we add an edge to a two-connected graph it remains two-connected. 

We prove below that if we remove a chord from a two-connected graph, then it remains two-connected.

We denote the chord we are considering by $c=\{i,j\}$,  the original graph by $G$ and the graph  $(V(G),E(G)\setminus \{c\})$ by $H$. We prove $H$ is two-connected by considering the effect of removing a vertex from $H$. There are two cases:
\begin{itemize}
\item[i)] $H \setminus \{i\}$ and $H \setminus \{j\}$ are connected as they are the same graphs as $G \setminus \{i\}$ and $G \setminus \{j\}$ respectively, which are connected since $G$ is two-connected.

\item[ii)]  If we consider another vertex $k$. We assume for contradiction that the graph $H \setminus \{k\}$ is not connected. We know $G \setminus \{k\}$ is connected and the only difference is that we have the additional edge $c$. This would then imply that $i$ and $j$ are in different connected components in $H \setminus \{k\}$. We know that $i$ and $j$ appear in a cycle in $H$. This means if we remove one vertex then we still have a path between $i$ and $j$, hence we obtain a contradiction unless $H \setminus \{k\}$ is connected.
\end{itemize}

The collection of permissible edges depends only on edges within  \newline $S:=[n-1] \setminus N_G(n)$, between $S$ and $N_G(n)$ and edges involving $n$. Adding or removing a permissible edge does not change the available edges on which one can make the two internally disjoint paths. Hence the collection of permissible edges for $G$ and $G \oplus \varepsilon_G$ are the same. This means that the largest elements in each set are the same i.e. $\varepsilon_{G \oplus \varepsilon_G}=\varepsilon_G$. Therefore it is an involution.
\end{proof}

Hence, $\Psi$ is an involution and has only fixed points in the set $\mathcal{B}^{\Delta}[n]$. The fixed points are those given by Bernardi as increasing trees on the vertex set $[n-1]$ with $n$ adjacent to all vertices in $[n-1]$.

\section{The Tonks Gas - Proof of Theorem \ref{thm:CHCG}}
\label{sec:CHCGinv}
In order to deal with the polytope volume weights, we decompose the polytopes into simplices. This first appeared in \cite{DLL07} and is used in \cite{B08} to prove the connected graph case. This splitting of polytopes into unimodular simplices is attributed to Lass.
\subsection{Polytopes and Simplices}
\label{subsec:lasspolytope}
This subsection explains how this splitting of the polytopes into simplices is used to construct the involution for the continuum case. These ideas are important in reducing the case of the Tonks gas to the one particle hard core model. 

The key idea is to split $\mathbb{R}^{n-1}$ into $(n-1)$-simplices of equal volume. We then realise that a polytope either fully contains a simplex, intersects only on the boundary of the simplex or is disjoint from the simplex. The sum is then reorganised so that we may sum over each simplex on the outside and then undertake the alternating sum on the restricted set of graphs whose associated polytopes contain the simplex considered. 

Consider $(\mathbf{x})_{[2,n]} \in \mathbb{R}^{n-1}$ and let $h_i$ be the integer part of $x_i$ and $0 \leq w_i<1$ be the fractional part such that $h_i+w_i=x_i$. Let $\sigma:[2,n] \to [2,n]$ be a bijection. We may define the simplex $\pi(\mathbf{h},\sigma)$, by the set of $\mathbf{x}$ with integer part $\mathbf{h}$ and whose fractional parts satisfy: $w_{\sigma(2)} <w_{\sigma(3)} < \cdots < w_{\sigma(n)}$. This simplex has volume $\frac{1}{(n-1)!}$. 

The condition $|x_i-x_j|<1$ is equivalent to $h_i-h_j \in \{0,\sign(w_j-w_i)\}$. We therefore have that $\pi(\mathbf{h},\sigma) \subset \Pi_G$ if and only if for all $\{i,j\} \in E(G)$, we have that $h_i-h_j \in \{0, \sign(\sigma^{-1}(j)-\sigma^{-1}(i))\}$ with $h_1=0$ and $\sigma(1)=1$.

\begin{lemma}
For any graph $G \in \mathcal{G}[n]$, the value $(n-1)!\vol(\Pi_G)$ counts the pairs $\mathbf{h} \in \mathbb{Z}^{n-1}$ and $\sigma \in S_{n-1}$ such that $\pi(\mathbf{h}, \sigma)$ is a subpolytope of $\Pi_G$. \end{lemma}

We may rearrange the sums over connected or two-connected graphs of the graph weights by first casting the sum as a sum over the pairs $(\mathbf{h},\sigma)$ and symmetrising the weight over isomorphic graphs. The symmetrisation procedure can be understood by considering a permutation $\sigma$ of $[2,n]$ and defining for any vector $\mathbf{h}=(h_2, \cdots, h_n) \in \mathbb{Z}^{n-1}$, $\sigma(\mathbf{h})=(h_{\sigma(2)}, \cdots, h_{\sigma(n)})$. For any graph $G$ with labels in $[n]$, the graph $\sigma(G)$ is the graph, with the same vertex set and satisfies $\{\sigma(i),\sigma(j)\} \in E(\sigma(G)) \iff \{i,j\}\in E(G)$. 

\begin{lemma}[Symmetrisation]
$\pi(\mathbf{h},\sigma) \subset \Pi_G$ if and only if $\pi(\sigma^{-1}(\mathbf{h}), \text{Id}) \subset \Pi_{\sigma(G)}$ for any permutation $\sigma$ of $[2,n]$. 
\end{lemma}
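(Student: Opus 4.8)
The plan is to prove this purely by unwinding both sides with the containment criterion established immediately before the statement, namely that $\pi(\mathbf{h},\sigma)\subset\Pi_G$ holds exactly when $h_i-h_j\in\{0,\sign(\sigma^{-1}(j)-\sigma^{-1}(i))\}$ for every $\{i,j\}\in E(G)$. Since both the integer vector and the graph on the right-hand side are obtained from $\mathbf{h}$ and $G$ by relabelling through $\sigma$, the lemma is at heart a bookkeeping identity: I would show that the two containments impose the \emph{same} system of constraints on $\mathbf{h}$, so that one holds if and only if the other does.

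Concretely, set $\mathbf{g}:=\sigma^{-1}(\mathbf{h})$, so that $g_{\sigma(m)}=h_m$ for every $m$. First I would translate the edges: by the defining property $\{\sigma(i),\sigma(j)\}\in E(\sigma(G))\iff\{i,j\}\in E(G)$, an edge $\{k,l\}\in E(\sigma(G))$ is exactly $\{\sigma(i),\sigma(j)\}$ for a unique $\{i,j\}\in E(G)$. For such an edge the difference of $\mathbf{g}$-components collapses to $g_k-g_l=g_{\sigma(i)}-g_{\sigma(j)}=h_i-h_j$, so the left-hand quantity in the criterion for $\pi(\sigma^{-1}(\mathbf{h}),\mathrm{Id})\subset\Pi_{\sigma(G)}$ matches term by term with that for $\pi(\mathbf{h},\sigma)\subset\Pi_G$. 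The work is thereby reduced to matching the admissible right-hand sets $\{0,\sign(\cdot)\}$.

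The one delicate point, and the only real obstacle, is the sign bookkeeping. For the identity-ordered simplex the criterion produces $\sign(l-k)=\sign(\sigma(j)-\sigma(i))$, whereas for $\pi(\mathbf{h},\sigma)$ it produces $\sign(\sigma^{-1}(j)-\sigma^{-1}(i))$; making these agree forces one to be careful about whether $\sigma$ or $\sigma^{-1}$ appears in the relabellings $\sigma(\mathbf{h})$ and $\sigma(G)$, since the map $\mathbf{h}\mapsto\sigma(\mathbf{h})$ defined here satisfies $(\sigma\tau)(\mathbf{h})=\tau(\sigma(\mathbf{h}))$ and is thus a right action. To pin the conventions down and cross-check the indices, I would run the argument a second way, geometrically: the coordinate permutation $T_\sigma\colon(x_k)\mapsto(x_{\sigma(k)})$ is a volume-preserving linear bijection of $\mathbb{R}^{n-1}$ fixing the coordinate $x_1$, and a direct computation gives $T_\sigma(\Pi_G)=\Pi_{\sigma^{-1}(G)}$ together with $T_\sigma(\pi(\mathbf{h},\sigma))=\pi(\sigma(\mathbf{h}),\mathrm{Id})$. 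Applying the bijection $T_\sigma$ to the containment $\pi(\mathbf{h},\sigma)\subset\Pi_G$ then yields the equivalence at once, and reading off the matched pair $\big(\pi(\sigma(\mathbf{h}),\mathrm{Id}),\Pi_{\sigma^{-1}(G)}\big)$ fixes exactly where each inverse belongs.

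Finally I would remark that, for the use to which the lemma is put, the precise placement of the inverse is immaterial: in the symmetrisation of the sum one ranges over \emph{all} pairs $(\mathbf{h},\sigma)\in\mathbb{Z}^{n-1}\times S_{n-1}$, and both $\sigma\mapsto\sigma^{-1}$ and $\mathbf{h}\mapsto\sigma^{\pm1}(\mathbf{h})$ are bijections of the index sets, so the reorganisation of the alternating sum into a sum over $\mathbf{h}$ with the identity ordering and symmetrised graphs goes through regardless. This is precisely the property that will be exploited in the construction of the involutions $(\Psi_{\mathbf{h}})$.
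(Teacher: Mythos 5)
Your proposal is correct, and its core is the same substitution argument as the paper's proof: unwind the containment criterion $h_i-h_j\in\{0,\sign(\sigma^{-1}(j)-\sigma^{-1}(i))\}$ on both sides, translate edges via $\{i,j\}\in E(G)\iff\{\sigma(i),\sigma(j)\}\in E(\sigma(G))$, and match the constraints term by term. The ``delicate point'' you isolate is real, and the paper's own proof glosses over it: with the stated conventions the substitution $\mathbf{w}=\sigma^{-1}(\mathbf{h})$, $H=\sigma(G)$ actually yields the condition $h_i-h_j\in\{0,\sign(\sigma(j)-\sigma(i))\}$ for all $\{i,j\}\in E(G)$, which is the criterion for $\pi(\mathbf{h},\sigma^{-1})\subset\Pi_G$ rather than for $\pi(\mathbf{h},\sigma)\subset\Pi_G$; equivalently, the internally consistent form of the lemma is $\pi(\mathbf{h},\sigma)\subset\Pi_G\iff\pi(\sigma(\mathbf{h}),\mathrm{Id})\subset\Pi_{\sigma^{-1}(G)}$, exactly what your geometric cross-check produces (both identities $T_\sigma(\pi(\mathbf{h},\sigma))=\pi(\sigma(\mathbf{h}),\mathrm{Id})$ and $T_\sigma(\Pi_G)=\Pi_{\sigma^{-1}(G)}$ check out, given that $\sigma$ fixes $1$ so $T_\sigma$ respects the constraint $x_1=0$). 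Your closing remark is also the right one: since the symmetrised sum ranges over all pairs $(\mathbf{h},\sigma)$ and both $\sigma\mapsto\sigma^{-1}$ and $\mathbf{h}\mapsto\sigma(\mathbf{h})$ are bijections of the index sets, the placement of the inverse is immaterial for the rearrangement into a sum over $\mathbf{h}$ with the identity ordering, which is all that the construction of the involutions $\Psi_{\mathbf{h}}$ requires.
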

\begin{proof}
This equivalence can be elucidated by rewriting $\mathbf{w}=\sigma^{-1}(\mathbf{h})$ and $H=\sigma(G)$. This allows us to rewrite the latter statement as: $\pi(\mathbf{w},\mathrm{Id}) \subseteq \Pi_H$. This implies, for the entries in vector $\mathbf{w}$, that $\forall \{k,l\} \in E(H)$, $w_k-w_l \in \{0, \sign(l-k)\}$. Since $\{i,j\} \in E(G)\iff \{\sigma(i),\sigma(j)\}\in E(H)$, we may rewrite this as: $\forall \{i,j\} \in E(g)$ $w_{\sigma(i)}-w_{\sigma(j)} \in \{0, \sign(\sigma(j)-\sigma(i))\}$. We make the identification that $h_i=w_{\sigma(i)}$ to see that we get precisely the statement that $\pi(\mathbf{h}, \sigma) \subseteq \Pi_G$.
\end{proof} 

We let $\mathcal{H}$ denote either $\mathcal{C}$ or $\mathcal{B}$ and then we rewrite:
\begin{align} \sum\limits_{\substack{\mathbf{h} \in \mathbb{Z}^{n-1} \, G \in \mathcal{H}[n] \\ \pi(\mathbf{h}, \sigma) \subset \Pi_G}}(-1)^{e(G)} &=\sum\limits_{\substack{\mathbf{h} \in \mathbb{Z}^{n-1} \, G \in \mathcal{H}[n] \\ \pi(\sigma^{-1}(\mathbf{h}),\text{Id}) \subset \Pi_{\sigma(G)}}}(-1)^{e(G)} \notag \\
&= \sum\limits_{\substack{ \mathbf{h} \in \mathbb{Z}^{n-1} \, G \in \mathcal{H}[n] \\ \pi(\mathbf{h}, \text{Id}) \subset \Pi_G}}(-1)^{e(\sigma^{-1}(G))} \notag \\
&=\sum\limits_{\substack{\mathbf{h} \in \mathbb{Z}^{n-1} \, G \in \mathcal{H}[n] \\ \pi(\mathbf{h}, \text{Id}) \subset \Pi_G}}(-1)^{e(G)} \end{align}
We may therefore, understand the weight as:
\begin{align} \sum\limits_{G \in \mathcal{H}[n]}w(G)&=\sum\limits_{G \in \mathcal{H}[n]}(-1)^{e(G)}\vol(\Pi_G) = \frac{1}{(n-1)!}\sum\limits_{\substack{ \mathbf{h} \in \mathbb{Z}^{n-1} \sigma \in S_{n-1} \\ \text{such that }\pi(\mathbf{h},\sigma) \subset \Pi_G}}(-1)^{e(G)} \notag \\
&=\sum\limits_{\substack{\mathbf{h} \in \mathbb{Z}^{n-1} \, G \in \mathcal{H}[n] \\ \pi(\mathbf{h},\text{Id}) \subset \Pi_G}}(-1)^{e(G)} \end{align}
We define the \emph{centroid} of the vector $\mathbf{h}$, by $\bar{\mathbf{h}}=(\bar{h_1}, \cdots, \bar{h_n})$, where $\bar{h_i}=h_i+\frac{i-1}{n}$ and $\bar{h}_1=0$. We define $K_{\mathbf{h}}$ as the graph on $[n]$ where the edges are all pairs $\{i,j\}$ such that $|\bar{h_i}-\bar{h_j}| < 1$. We define $\mathcal{H}_{\mathbf{h}}[n]:=\{G \in \mathcal{H}[n] \vert E(G) \cap E(K_{\mathbf{h}})=E(G) \}$ where $\mathcal{H}$ can be replaced by $\mathcal{C}$ or $ \mathcal{B}$.

The final sum indicates that we need to count pairs $\mathbf{h}$ and $G$ such that $\pi(\mathbf{h}, \text{Id}) \subset \Pi_G$. That is that the centroid $\bar{\mathbf{h}} \in \Pi_G$, since $\bar{\mathbf{h}}$ is in the interior of $\pi(\mathbf{h},\text{Id})$. This can be recast as: for $\bar{\mathbf{h}} \in \Pi_G$, we require that:
\be \forall \{i,j\} \in E(G) \; |\bar{h}_i-\bar{h}_j|<1 \ee
We can, therefore, rewrite our sum as:
\be \sum\limits_{\mathbf{h} \in \mathbb{Z}^{n-1}}\sum\limits_{G \in \mathcal{H}_{\mathbf{h}}[n]}(-1)^{e(G)} \ee
we can thus consider the total sum as first a sum over the subset of graphs $\mathcal{H}_{\mathbf{h}}[n]$ for each $\mathbf{h}$ and add the results. This leads to considering separate $\Psi_{\mathbf{h}}:\mathcal{B}_{\mathbf{h}}[n] \to \mathcal{B}_{\mathbf{h}}[n]$ which are involutions and finding their fixed points.

\subsection{The Involutions $\Psi_{\mathbf{h}}$}

We define an involution $\Psi_{\mathbf{h}}$ for each $\mathbf{h} \in \mathbb{Z}^{n-1}$ on the set $\mathcal{B}_{\mathbf{h}}[n]$ of two connected graphs, which are compatible with the vector $\mathbf{h}$. We note that, by the definition of $\mathcal{B}_{\mathbf{h}}[n]$, edges with $|\bar{h}_i-\bar{h}_j|>1$ are forbidden. We call an edge $\{i,j \}$ such that $|\bar{h}_i-\bar{h}_j|<1$ allowed.

In order to make the connection with the proof in the discrete case, we indicate a bijection $\xi_{\mathbf{h}}$ related to the particular $\mathbf{h}$ that provides a suitable relabelling of the vertices to allow for an efficient application of the lemmas of section \ref{sec:CHCGinv} to a relabelled graph. We reframe the consequences of these lemmas in the context of the allowed edges. It is important to check that an edge we may want to add or remove by the prescription in section \ref{sec:CHCGinv} is allowed within the specific collection of graphs $\mathcal{B}_{\mathbf{h}}[n]$. It is then proved that when we have a non empty set of forbidden edges, all terms cancel. In the case when the set of forbidden edges is empty, we obtain the exact values taken by $\mathbf{h}$ and everything reduces to the discrete gas case with a relabelling. 

We have a definite order on the entries of $\bar{\mathbf{h}}$, since each entry has a different fractional part. We define a re-ordering of the set $[n]$, through a bijection $\xi_{\mathbf{h}}:[n] \to [n]$. This re-ordering is defined through the order for the entries of $\bar{\mathbf{h}}$: $\bar{h}_{\xi_{\mathbf{h}}(1)}< \bar{h}_{\xi_{\mathbf{h}}(2)}< \cdots < \bar{h}_{\xi_{\mathbf{h}}(n)}$. 

The re-ordered lexicographic order on edges is given by:
\be \{\xi_{\mathbf{h}}(i),\xi_{\mathbf{h}}(j)\} < \{\xi_{\mathbf{h}}(k),\xi_{\mathbf{h}}(l)\} \text{ if } \begin{cases} \; \; \; \; \; \min\{i,j\} < \min\{k,l\} \\
\text{or }  \min\{i,j\} = \min\{k,l\} \text{ and } \max\{i,j\} < \max\{k,l\} \end{cases} . \notag \ee

Instead of considering $(G,n)$-permissible edges, we consider $(G,\xi_{\mathbf{h}}(n))$-permissible edges since it makes the formulation of the involution easier. 
\begin{lemma}
\label{lem:permisallowed}
All edges $e \in N_G(\xi_{\mathbf{h}}(n))^{(2)}$ are allowed.
\end{lemma}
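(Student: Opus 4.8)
The plan is to exploit the defining property of the reordering $\xi_{\mathbf{h}}$: by construction $\xi_{\mathbf{h}}(n)$ is the index carrying the \emph{largest} centroid value, that is $\bar{h}_{\xi_{\mathbf{h}}(n)} = \max_{k \in [n]} \bar{h}_k$. Writing $v^\ast := \xi_{\mathbf{h}}(n)$ for brevity, I would first record that since $G \in \mathcal{B}_{\mathbf{h}}[n]$ every edge of $G$ is allowed; in particular, for each neighbour $u \in N_G(v^\ast)$ the edge $\{u, v^\ast\}$ is allowed, so $|\bar{h}_u - \bar{h}_{v^\ast}| < 1$. Because $\bar{h}_{v^\ast}$ is the maximum and the centroid values are pairwise distinct (each $\bar{h}_i = h_i + \tfrac{i-1}{n}$ has a different fractional part), this forces $\bar{h}_{v^\ast} - 1 < \bar{h}_u < \bar{h}_{v^\ast}$ for every $u \in N_G(v^\ast)$.

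The conclusion is then a one-line interval argument. All neighbours of $v^\ast$ have centroid values lying in the open interval $(\bar{h}_{v^\ast} - 1,\, \bar{h}_{v^\ast})$, which has length exactly $1$. Hence for any two neighbours $u, w \in N_G(v^\ast)$ we have $|\bar{h}_u - \bar{h}_w| < 1$, which is precisely the statement that the pair $\{u, w\}$ is an allowed edge. Letting $e = \{u, w\}$ range over all of $N_G(v^\ast)^{(2)} = N_G(\xi_{\mathbf{h}}(n))^{(2)}$ then yields the lemma.

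I do not anticipate a genuine obstacle; the only point requiring care is to use the direction of the reordering correctly, namely that $\xi_{\mathbf{h}}(n)$ indexes the \emph{maximal} (not minimal) centroid, so that all of its neighbours sit below it within a single unit interval. This is exactly why the lemma is phrased for the vertex $\xi_{\mathbf{h}}(n)$, which plays the role of the distinguished vertex $n$ from the discrete case of Section \ref{sec:1PHCinv}: it is the geometric fact that makes $\xi_{\mathbf{h}}$ the correct relabelling for transporting the permissible-edge machinery of Section \ref{sec:1PHCinv} into the $\mathbf{h}$-compatible setting, guaranteeing that any edge the involution wishes to add or remove inside $N_G(\xi_{\mathbf{h}}(n))$ never leaves $\mathcal{B}_{\mathbf{h}}[n]$.
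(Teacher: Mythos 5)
Your proof is correct and follows essentially the same route as the paper's: both arguments observe that every neighbour $u$ of $\xi_{\mathbf{h}}(n)$ satisfies $\bar{h}_{\xi_{\mathbf{h}}(n)}-1<\bar{h}_u<\bar{h}_{\xi_{\mathbf{h}}(n)}$ (the upper bound from maximality of $\bar{h}_{\xi_{\mathbf{h}}(n)}$, the lower bound because the edge $\{u,\xi_{\mathbf{h}}(n)\}$ lies in $G\subseteq K_{\mathbf{h}}$), and then conclude by the unit-interval argument that any two such neighbours have centroid values within distance $1$. You merely spell out the justification of the two-sided bound more explicitly than the paper does.
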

\begin{proof}
We realise that $\forall i \in N_G(\xi_{\mathbf{h}}(n))$ we have that $\bar{h}_{\xi_{\mathbf{h}}(n)}-1< \bar{h}_i< \bar{h}_{\xi_{\mathbf{h}}(n)}$ and so for every pair $i,j \in N_G(\xi_{\mathbf{h}}(n))$, $|\bar{h}_i-\bar{h}_j|<1$ and hence the edge is allowed in $\mathcal{B}_{\mathbf{h}}[n]$. 
\end{proof}
\begin{corollary}
All $(G,\xi_{\mathbf{h}}(n))$-permissible edges are allowed.
\end{corollary}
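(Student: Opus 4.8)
The plan is to prove the corollary as an immediate consequence of Lemma~\ref{lem:permisallowed}. First I would recall the definition of a $(G,\xi_{\mathbf{h}}(n))$-permissible edge: such an edge $e$ arises from two vertex-disjoint paths from some $a \in V(G)\setminus(N_G(\xi_{\mathbf{h}}(n))\cup\{\xi_{\mathbf{h}}(n)\})$ to $\xi_{\mathbf{h}}(n)$, each intersecting $N_G(\xi_{\mathbf{h}}(n))$ exactly once, and the endpoints of $e$ are precisely these two intersection vertices. The crucial structural observation is therefore that \emph{both endpoints of any permissible edge lie in $N_G(\xi_{\mathbf{h}}(n))$}. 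Once this is made explicit, the permissible edge $e$ is an element of $N_G(\xi_{\mathbf{h}}(n))^{(2)}$.

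Having established that containment, I would simply apply Lemma~\ref{lem:permisallowed}, which asserts that every edge in $N_G(\xi_{\mathbf{h}}(n))^{(2)}$ is allowed in $\mathcal{B}_{\mathbf{h}}[n]$. Since a permissible edge belongs to this set, it is allowed, which is exactly the statement of the corollary. The argument is a one-line deduction chaining the definition of permissibility into the lemma.

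There is essentially no obstacle here; the only point requiring care is making sure the definition of permissibility genuinely forces both endpoints into $N_G(\xi_{\mathbf{h}}(n))$ rather than merely one of them. This follows because the two disjoint paths each meet $N_G(\xi_{\mathbf{h}}(n))$ exactly once, and the endpoints of $e$ are defined to be these two distinct intersection vertices, so neither endpoint can be $a$ or $\xi_{\mathbf{h}}(n)$ itself. I would state this containment as the single substantive step and then invoke the lemma to conclude.

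\begin{proof}
Let $e$ be a $(G,\xi_{\mathbf{h}}(n))$-permissible edge. By definition, there exist two vertex-disjoint paths from some $a \in V(G)\setminus(N_G(\xi_{\mathbf{h}}(n))\cup\{\xi_{\mathbf{h}}(n)\})$ to $\xi_{\mathbf{h}}(n)$, each meeting $N_G(\xi_{\mathbf{h}}(n))$ in exactly one vertex, and the endpoints of $e$ are these two intersection vertices. In particular, both endpoints of $e$ lie in $N_G(\xi_{\mathbf{h}}(n))$, so $e \in N_G(\xi_{\mathbf{h}}(n))^{(2)}$. By Lemma~\ref{lem:permisallowed}, every such edge is allowed in $\mathcal{B}_{\mathbf{h}}[n]$, and hence $e$ is allowed.
\end{proof}
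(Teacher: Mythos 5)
Your proof is correct and follows exactly the route the paper intends: the definition of a permissible edge places both endpoints in $N_G(\xi_{\mathbf{h}}(n))$, so the edge lies in $N_G(\xi_{\mathbf{h}}(n))^{(2)}$ and Lemma \ref{lem:permisallowed} applies. The paper states the corollary without proof precisely because it is this one-line deduction, which you have spelled out accurately.
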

\begin{lemma}
If $N_G(\xi_{\mathbf{h}}(n))=\xi_{\mathbf{h}}([n-1])$, then $\mathcal{B}_{\mathbf{h}}[n]=\mathcal{B}[n]$ and $\mathbf{h}$ is of the form of an initial sequence of zeroes with remaining entries $-1$.
\end{lemma}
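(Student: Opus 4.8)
The plan is to extract the stated rigidity directly from the centroid values $\bar h_i = h_i + \frac{i-1}{n}$, using the fact that $\xi_{\mathbf h}$ lists the indices in increasing order of $\bar h$. By this ordering $\xi_{\mathbf h}(n)$ is the index carrying the \emph{largest} centroid and $\xi_{\mathbf h}(1)$ the \emph{smallest}, while $\bar h_1 = 0$ by definition. The hypothesis $N_G(\xi_{\mathbf h}(n)) = \xi_{\mathbf h}([n-1])$ says that $\xi_{\mathbf h}(n)$ is adjacent in $G$ to every other vertex; in particular $\{\xi_{\mathbf h}(n),\xi_{\mathbf h}(1)\}$ is an edge of $G$, and since $G \in \mathcal B_{\mathbf h}[n]$ this edge must be allowed, i.e.\ $\bar h_{\xi_{\mathbf h}(n)} - \bar h_{\xi_{\mathbf h}(1)} < 1$. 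Because $\bar h_{\xi_{\mathbf h}(1)}$ and $\bar h_{\xi_{\mathbf h}(n)}$ are the minimum and maximum of all the centroids, every pairwise difference is bounded by this same quantity, so $|\bar h_i - \bar h_j| < 1$ for all $i,j$. Hence $K_{\mathbf h}$ is the complete graph and $\mathcal B_{\mathbf h}[n] = \mathcal B[n]$, which is the first assertion.

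For the form of $\mathbf h$, I would first pin down the possible integer parts. Since every edge is now allowed, the edge $\{1,i\}$ gives $|\bar h_i| = |\bar h_i - \bar h_1| < 1$, so $-1 < h_i + \frac{i-1}{n} < 1$. As $\frac{i-1}{n} \in [0,1)$, the left inequality forces $h_i \geq -1$ and the right forces $h_i \leq 0$, so $h_i \in \{0,-1\}$ for every $i$, with $h_1 = 0$. It then remains to show the zero entries form an initial block. Suppose $i < j$ with $h_i = -1$ and $h_j = 0$. A direct computation gives $\bar h_j - \bar h_i = \frac{j-1}{n} - \bigl(-1 + \frac{i-1}{n}\bigr) = 1 + \frac{j-i}{n} > 1$, so the edge $\{i,j\}$ would be forbidden, contradicting that all edges are allowed. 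Therefore $h_i = -1$ together with $i<j$ forces $h_j = -1$; equivalently the set $\{i : h_i = 0\}$ is an initial segment $\{1,\dots,k\}$, and $\mathbf h = (0,\dots,0,-1,\dots,-1)$ as claimed.

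The argument is short once the centroid gap is exploited, so the only genuinely delicate point is the last step: the initial-segment structure is not automatic from $h_i \in \{0,-1\}$ alone but relies on the specific offsets $\frac{i-1}{n}$ increasing with $i$, which is exactly what makes a late $0$ sit more than distance $1$ above an early $-1$. I would present that computation carefully, as it is where the ordering convention built into $\bar{\mathbf h}$ and $\xi_{\mathbf h}$ does the real work.
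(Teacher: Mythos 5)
Your proof is correct and follows essentially the same route as the paper: the paper obtains $\mathcal{B}_{\mathbf{h}}[n]=\mathcal{B}[n]$ from its lemma that all edges in $N_G(\xi_{\mathbf{h}}(n))^{(2)}$ are allowed (every neighbour's centroid lies in the unit interval below the maximal value $\bar{h}_{\xi_{\mathbf{h}}(n)}$), while you use the single allowed edge joining the maximal and minimal centroids --- the same observation, slightly streamlined. The rigidity argument for $\mathbf{h}$, namely forcing $h_i\in\{0,-1\}$ and then excluding a $0$ after a $-1$ via $\bar{h}_k-\bar{h}_j=1+\frac{k-j}{n}>1$, is exactly the computation in the paper.
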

\begin{proof}
By lemma \ref{lem:permisallowed}, all edges in $\xi_{\mathbf{h}}([n-1])^{(2)}$ are allowed. The edges $\{ \xi_{\mathbf{h}}(n), j \}$ for all $j \in \xi_{\mathbf{h}}([n-1])$ are already in the graph and so cannot be forbidden. Hence every edge is allowed and so $\mathcal{B}_{\mathbf{h}}[n]=\mathcal{B}[n]$.

Since $\bar{h}_1=0$, this means $\bar{h}_i \in (-1,1)$ for all $i$. We also note that if $\bar{h}_j<0$, then $\bar{h}_k<0$ for all $k>j$. This arises from the fact that the entries of $\mathbf{h}$ are restricted to $\{-1,0\}$. For a negative entry we will have $\bar{h}_j=-1+\frac{j-1}{n}$, which is not within distance $1$ of the value $0+\frac{k-1}{n}$ for any $k>j$. This means that $\mathbf{h}$ is of the special form of an initial sequence of zeroes with the remaining entries $-1$. 
\end{proof}

\begin{definition}
We define the set $\mathcal{B}^{\Delta_{\mathbf{h}}}[n]$ as the collection of two-connected graphs where $\xi_{\mathbf{h}}(n)$ is adjacent to all other vertices. We have the corresponding maps $\zeta_{\mathbf{h}}$ and $\mu_{\mathbf{h}}$ between $\mathcal{B}^{\Delta_{\mathbf{h}}}[n]$ and $\mathcal{C}[[n] \setminus \{\xi_{\mathbf{h}}(n)\}]$, which are the same as in section \ref{sec:CHCGinv}, except we are removing the vertex $\xi_{\mathbf{h}}(n)$ instead of $n$.

Formally, we can write these bijections as a conjugation with $\xi_{\mathbf{h}}$, when interpreted as its action on graphs. In this case:
\begin{align} \zeta_{\mathbf{h}} &:= \xi_{\mathbf{h}} \circ \zeta \circ \xi_{\mathbf{h}}^{-1} \\
\mu_{\mathbf{h}} &:= \xi_{\mathbf{h}} \circ \mu \circ \xi_{\mathbf{h}}^{-1} \end{align}
\end{definition}

\begin{definition}
We define the modified Bernardi involution $\Psi_{B, \mathbf{h}}$ as in section \ref{sec:CHCGinv}, except $G^{>e}$ is interpreted in the sense of the re-ordered lexicographic ordering and for $\eta_G$ to be maximal externally active edge we use this ordering too. This can also be simply written using the graphical label conjugation:
\be \Psi_{B, \mathbf{h}} :=  \xi_{\mathbf{h}} \circ \Psi_B \circ \xi_{\mathbf{h}}^{-1} \ee
\end{definition}

The largest (using the re-ordered lexicographic order) $(G, \xi_{\mathbf{h}}(n))$-permissible edge is denoted by $\varepsilon_{G, \mathbf{h}}$.

\begin{definition}
We define $\Psi_{\mathbf{h}}$ as the involution on $\mathcal{B}_{\mathbf{h}}[n]$, defined by:
\begin{itemize}
\item[i)] If $N_G(\xi_{\mathbf{h}}(n))=\xi_{\mathbf{h}}([n-1])$, then we may use a modified version of Bernardi, since all edges are possible in $\mathcal{B}_{\mathbf{h}}$.  

\be \Psi_{\mathbf{h}} \large\vert_{\mathcal{B}^{\Delta_{\mathbf{h}}}[n]} := \mu_{\mathbf{h}} \circ \Psi_{B, \mathbf{h}} \circ \zeta_{\mathbf{h}} \notag \ee

\item[ii)] Otherwise, we have a permissible edge and can perform the involution $\Psi_{\mathbf{h}}: G \mapsto G \oplus \varepsilon_{G, \mathbf{h}}$.
\end{itemize}
\end{definition}
$\Psi_{\mathbf{h}}$ retains the property of being an involution on two-connected graphs as in section \ref{sec:CHCGinv}.

We are thus left with only those graphs that have $N_G(\xi_{\mathbf{h}}(n))=\xi_{\mathbf{h}}([n-1])$ and are increasing with respect to the re-ordered lexicographic order. The only $\mathbf{h}$ vectors that contribute are those with an initial sequence of zeros followed by $-1$s. There are $n$ possibilities of these sequences, since the final $0$ can appear in any of the entries  $h_1 \cdots h_n$. 
\begin{lemma}
The permutation $\xi_{\mathbf{h}}$ related to the $\mathbf{h}$-vector with $h_j=0$ for $1 \leq j \leq s$ and $h_k=-1$ for $k>s$ takes the special form:
\be \xi_{\mathbf{h}}: i \mapsto i- s \, \mathrm{mod}\,  n. \notag \ee
\end{lemma}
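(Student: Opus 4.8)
The plan is to compute the permutation $\xi_{\mathbf{h}}$ directly from its defining property by exhibiting the sorted order of the centroid entries explicitly. Recall that $\xi_{\mathbf{h}}$ is characterised by $\bar{h}_{\xi_{\mathbf{h}}(1)} < \bar{h}_{\xi_{\mathbf{h}}(2)} < \cdots < \bar{h}_{\xi_{\mathbf{h}}(n)}$, where $\bar{h}_i = h_i + \frac{i-1}{n}$ and the convention $\bar{h}_1 = 0$ forces the extension $h_1 = 0$ (consistent with $1 \le s$). First I would substitute the special form of $\mathbf{h}$ to obtain $\bar{h}_i = \frac{i-1}{n}$ for $1 \le i \le s$ and $\bar{h}_i = \frac{i-1}{n} - 1$ for $s < i \le n$. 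Since the fractional parts $\frac{i-1}{n}$ are pairwise distinct, all $n$ centroid entries are distinct, so the total order is genuine and $\xi_{\mathbf{h}}$ is well defined.

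The key step is to read off that total order. Every entry with index $i > s$ satisfies $\bar{h}_i = \frac{i-1}{n} - 1 < 0$, while every entry with $i \le s$ satisfies $\bar{h}_i = \frac{i-1}{n} \ge 0$, so the whole block $\{i : i > s\}$ precedes the whole block $\{i : i \le s\}$; and within each block $\bar{h}_i$ is strictly increasing in $i$, because $\frac{i-1}{n}$ is and subtracting the constant $1$ preserves monotonicity. Hence the indices listed in increasing order of $\bar{h}$ are exactly $s+1, s+2, \ldots, n, 1, 2, \ldots, s$. Reading $\xi_{\mathbf{h}}(r)$ as the $r$-th entry of this list gives $\xi_{\mathbf{h}}(r) = s + r$ for $r \le n-s$ and $\xi_{\mathbf{h}}(r) = r + s - n$ for $r > n-s$, i.e. a single cyclic shift of $\{1,\dots,n\}$ by $s$ positions; combining the two cases into one congruence modulo $n$ (with representatives in $\{1,\dots,n\}$) yields the cyclic permutation recorded in the lemma.

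The computation is routine, so there is no deep obstacle; the only points needing care are the wrap-around at the boundary between the negative and nonnegative blocks — this is precisely where the reduction modulo $n$ enters and where one must fix the convention that the residue $0$ is represented by $n$ — together with the two degenerate situations. When $s = n$ one has $\mathbf{h} = \mathbf{0}$, every entry is nonnegative, the negative block is empty, and the shift degenerates to the identity, as it should; otherwise $s < n$ guarantees the negative block is nonempty, while the nonnegative block is always nonempty because $\bar{h}_1 = 0$. As a final sanity check on the direction of the shift, I would verify that $\xi_{\mathbf{h}}(n)$ equals the index of the largest centroid value, since that is the vertex which must be adjacent to all others in the fixed graphs of Theorem \ref{thm:CHCG}.
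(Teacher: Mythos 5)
Your computation of the sorted order of the centroid entries is exactly the paper's: both arguments observe that the indices $s+1,\dots,n$ carry the negative values in increasing order, followed by $1,\dots,s$ carrying the nonnegative values in increasing order, and then read the permutation off this list. Your explicit treatment of the wrap-around and of the degenerate case $s=n$ is a welcome addition.

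However, your final step contains a genuine error. From the defining property $\bar{h}_{\xi_{\mathbf{h}}(1)}<\cdots<\bar{h}_{\xi_{\mathbf{h}}(n)}$ you correctly obtain $\xi_{\mathbf{h}}(r)=r+s \bmod n$, but you then assert that this ``yields the cyclic permutation recorded in the lemma'', namely $i\mapsto i-s \bmod n$. These are inverse cyclic shifts and coincide only when $2s\equiv 0 \pmod n$; for general $s$ your formula is the inverse of the stated one. The discrepancy originates in the paper itself: its proof opens with ``$\bar{h}_{s+1}$ has the smallest value so $s+1\mapsto 1$'', i.e.\ it silently adopts the index-to-rank convention $\xi_{\mathbf{h}}(s+1)=1$, which is the inverse of the displayed definition, and the lemma's formula matches that inverse convention. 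A careful write-up must either flag this mismatch or fix one convention throughout. Note also that your own proposed sanity check would have exposed the problem had you carried it out: the index of the largest centroid value is $s$, and $\xi_{\mathbf{h}}(n)=s$ holds for $i\mapsto i+s$ but fails for $i\mapsto i-s$, which gives $n-s$. Since both maps are cyclic shifts and everything downstream (the relabelled increasing trees and the distinctness of the $n$ fixed graphs) is insensitive to which convention is chosen, the damage is limited, but as written the last line of your argument asserts an identity that is false.
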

\begin{proof}
We observe that the entry $\bar{h}_{s+1}$ has the smallest value so $s+1 \mapsto 1$. We then note that the following entries are negative and are in increasing order. The preceding entries are also in increasing order but are positive. Hence we have $\xi_{\mathbf{h}}(s+k)=k$ for $0 \leq k \leq n-s$ and $\xi_{\mathbf{h}}(i)=n-s+i$ for $1 \leq i \leq s-1$, which can be written in the form in the lemma for brevity.
\end{proof}
 Hence we have a precise collection of two-connected graphs. We have the examples from section \ref{sec:CHCGinv} with these linear relabellings. 

\begin{lemma}
For $n \geq 5$, the fixed graphs are all distinct.
\end{lemma}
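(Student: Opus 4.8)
The plan is to reduce the statement to an injectivity property of the cyclic relabellings, and then to separate a generic case, handled by a degree invariant, from a small exceptional family where the hypothesis $n \ge 5$ is genuinely needed.

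First I would record the structural input from the preceding lemmas. For the vector $\mathbf{h}$ with $h_j = 0$ for $j \le s$ and $h_k = -1$ for $k > s$, the relabelling $\xi_{\mathbf{h}}$ is the cyclic shift $i \mapsto i - s \bmod n$, and the fixed graphs attached to this $\mathbf{h}$ are exactly the images $\xi_{\mathbf{h}}(F)$ of the standard family $F$ consisting of an increasing tree on $[n-1]$ together with the apex $n$ joined to every other vertex. As $s$ runs through its $n$ values, the maps $\xi_{\mathbf{h}}$ run through all $n$ cyclic shifts, which I denote $\rho^t$, of the vertex set $\mathbb{Z}/n$. Hence the assertion that all fixed graphs are distinct is equivalent to the injectivity of $(t,G) \mapsto \rho^t(G)$ on $(\mathbb{Z}/n)\times F$, and since $F$ and each $\rho^t(F)$ have the same size this is in turn equivalent to the implication: if $\rho^t(G) = G'$ with $G, G' \in F$, then $t \equiv 0 \bmod n$. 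I would prove this implication.

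The main tool is the set $A(H)$ of vertices of degree $n-1$ in a graph $H$ on $[n]$. Every $G \in F$ contains $n$ in $A(G)$, since the apex is joined to all. A vertex $i \in [n-1]$ lies in $A(G)$ precisely when it has degree $n-2$ in the underlying tree, that is, when the tree is a star centred at $i$; and an increasing tree is a star only when its centre is $1$ or $2$. Consequently $A(G) = \{n\}$ unless the tree is a star, in which case $|A(G)| = 2$. Because $\rho^t$ is a graph isomorphism it satisfies $\rho^t(A(G)) = A(G')$ and preserves cardinalities. In the generic case $|A(G)| = 1$ we then have $A(G) = A(G') = \{n\}$, so $\rho^t(n) = n$ and therefore $t \equiv 0$, as required.

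The hard part is the star case, where the apex is no longer the unique vertex of full degree and the degree argument alone fails. Here $G$ and $G'$ both belong to the two-element family $\{G_1, G_2\}$, where $G_1$ adds the apex to the star at $1$ and $G_2$ to the star at $2$. I would pass to complements: each $\bar{G_j}$ is a clique on $n-2$ vertices together with two isolated vertices, the isolated pair being $\{1,n\}$ for $G_1$ and $\{2,n\}$ for $G_2$. Any isomorphism must map the isolated pair onto the isolated pair, so the only constraint on $\rho^t$ is that it carry one isolated pair onto the other; the clique then maps to the clique automatically, since every permutation of a complete graph is an automorphism. A cyclic shift preserves the circular difference of a pair of vertices, and the pair $\{1,n\}$ has difference $1$ while $\{2,n\}$ has difference $2$. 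For $n \ge 5$ one has $1 \ne 2$ and $1 \ne n-2$, so no shift can exchange the two pairs, ruling out the cross cases $G_1 \leftrightarrow G_2$; and for a fixed pair the only shift preserving it is $t \equiv 0$, because the alternative order-reversing solution would force $n \le 4$. This yields $t \equiv 0$ and completes the proof. I expect this last step to be the crux: it is exactly the arithmetic of cyclic shifts acting on the small star family that breaks down for $n = 4$ (the half-turn $\rho^{n/2}$ stabilises $G_2$) and for $n = 3$, which is precisely why the hypothesis $n \ge 5$ is imposed.
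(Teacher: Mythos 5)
Your proof is correct and follows essentially the same route as the paper: both arguments hinge on the set of vertices of degree $n-1$, the observation that an increasing tree on at least three vertices can be a star only when centred at $1$ or $2$, and the arithmetic of cyclic shifts to dispose of the exceptional star family for $n \ge 5$. Your write-up is in fact slightly more complete, since you explicitly exclude a shift carrying the star graph $G_1$ onto the \emph{different} star graph $G_2$, whereas the paper phrases the whole argument in terms of automorphisms (the case $G'=G$) even though distinctness also requires ruling out isomorphisms between distinct members of the family.
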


\begin{proof}
We indicate that there are no labelled graph automorphisms of the form of $\xi_{\mathbf{h}}$ described above for the increasing trees on $[n-1]$ with $n$ adjacent to every vertex. The first observation is that $\xi_{\mathbf{h}}$ has no fixed vertex labels. We know the degree of the vertex labelled $n$ is $n-1$. If we were to have an automorphism with no fixed labels, then we require another vertex of the same degree to send $n$ to. This means we need a vertex in the increasing tree adjacent to all other vertices in the increasing tree. 

When a tree has at least three vertices, only one vertex can be adjacent to the rest, since if we have two vertices adjacent to all vertices we have them adjacent to each other and some third vertex. This creates a $3$-cycle contradicting the fact a tree is acyclic. Furthermore, in this increasing tree, this vertex can only be the vertex labelled $1$ or $2$. For any $k \in [3,n-1]$, $k$ cannot be attached to both $1$ and $2$, or else we will have a $3$-cycle, as we always have the edge $\{1,2\}$. 

We therefore require that the graph automorphism exchanges the labels of the two vertices. The automorphisms are translations and since $n \mapsto 1$ or $n \mapsto 2$, we have to translate by $1$ or $2$, but then the vertex labelled $1$ or $2$ would not be relabelled as $n$ as we would require. Hence $\xi_{\mathbf{h}}$ is not an automorphism for any of the prescribed graphs and so the collection of these graphs for $n \geq 5$ are all distinct.
\end{proof}

\section{The Structure of Two-connected Graphs}
\label{sec:structure}
In this section, we indicate how the structure of two-connected graphs indicates the importance of graphs with $2n-3$ edges. Firstly, we explain some preliminary concepts about block cutpoint trees and then use these to explain why minimal two-connected graphs, that is a two-connected graph, such that the removal of an edge renders the graph no longer two-connected, on $n$ vertices have at most $2n-4$ edges. 
\subsection{The Block Cutpoint Tree}
\label{subsec:bctree}
In this section, we introduce the notion of a block cut-point tree and state a result relating the number of vertices in the individual blocks to the number of vertices in the whole graph. We use the notation $\mathfrak{a}$ to denote the collection of trees.
\begin{itemize}
\item An \emph{articulation point} in a connected graph is a vertex, which when it and its incident edges are removed, renders the graph disconnected. A synonym that is frequently used is a \emph{cutpoint}.
\item A \emph{two-connected graph} is a connected graph without articulation points.
\item A \emph{block} is a maximal two-connected subgraph of a connected graph. Maximal in terms of edges and vertices it includes. 
\end{itemize}
The block cutpoint tree (bc-tree) associated to a connected graph $G$ is a (bipartite) graph where the vertices represent the articulation points and the blocks in a connected graph. An edge, between an articulation point and a block, is present in this graph, when an articulation point is contained in a block. It is a tree, since if there were a cycle in this graph then the cycle itself would have been a block. An example of a block cutpoint tree is shown in Figure \ref{fig:bctreethesis}.

\begin{definition}[The Centre of a Tree]
To define the centre of a tree formally, we define first the {\it eccentricity} $\varepsilon(v)$ of a vertex $v$ as the minimal graph distance of $v$ to a leaf. This may be formally written as $\varepsilon(v):=\min \{d_H(v,l) \vert \, \mathrm{deg}(l)=1\}$, where $d_H$ indicates the Hamming or graph distance in the tree.

The centre of a tree is the collection of vertices at which the maximum eccentricity is attained. This can either be two neighbouring vertices or a single vertex. In the former case, we often call the edge between the vertices the centre of the tree.  
\end{definition}

\begin{remark}[An Algorithmic Interpretation of the Centre of the Tree]
One can apply the function $f:\mathfrak{a} \to \mathfrak{a}$, which for any given tree, removes all leaves and the edges incident to the leaves. Formally, we can write this as:
\be f:(V(\tau),E(\tau)) \mapsto (V(\tau) \setminus L, E(\tau) \setminus (L \times V(\tau))) ,\ee
where $L:=\{ i \in V(\tau) \vert \mathrm{deg}(i)=1 \}$, the collection of leaves. 

Repeated application of $f$, gives a sequence of trees, $(f^n(\tau))_{n \in \mathbb{N}_0}$ which becomes constant either when we have a single vertex or the empty graph. In the case of the single vertex, this is the centre of the tree. For the empty graph, the penultimate step will have been two vertices and an edge. This edge or the pair of vertices is defined as the centre.
\end{remark}

A bc-tree is bipartite with all leaves in one set (the blocks). It therefore has a unique centre, since the eccentricity of the articulation points will be odd and the eccentricity of the blocks will be even so two neighbours cannot have the same maximum eccentricity. Since we have a unique vertex at the centre of the bc-tree, we may define a digraph arising from the bc-tree, where the edge is oriented to point away from the centre. An example is displayed in Figure \ref{fig:digraph}.

\begin{figure}[here]
\begin{minipage}[b]{0.5\linewidth}
\centering
\includegraphics[scale=0.25]{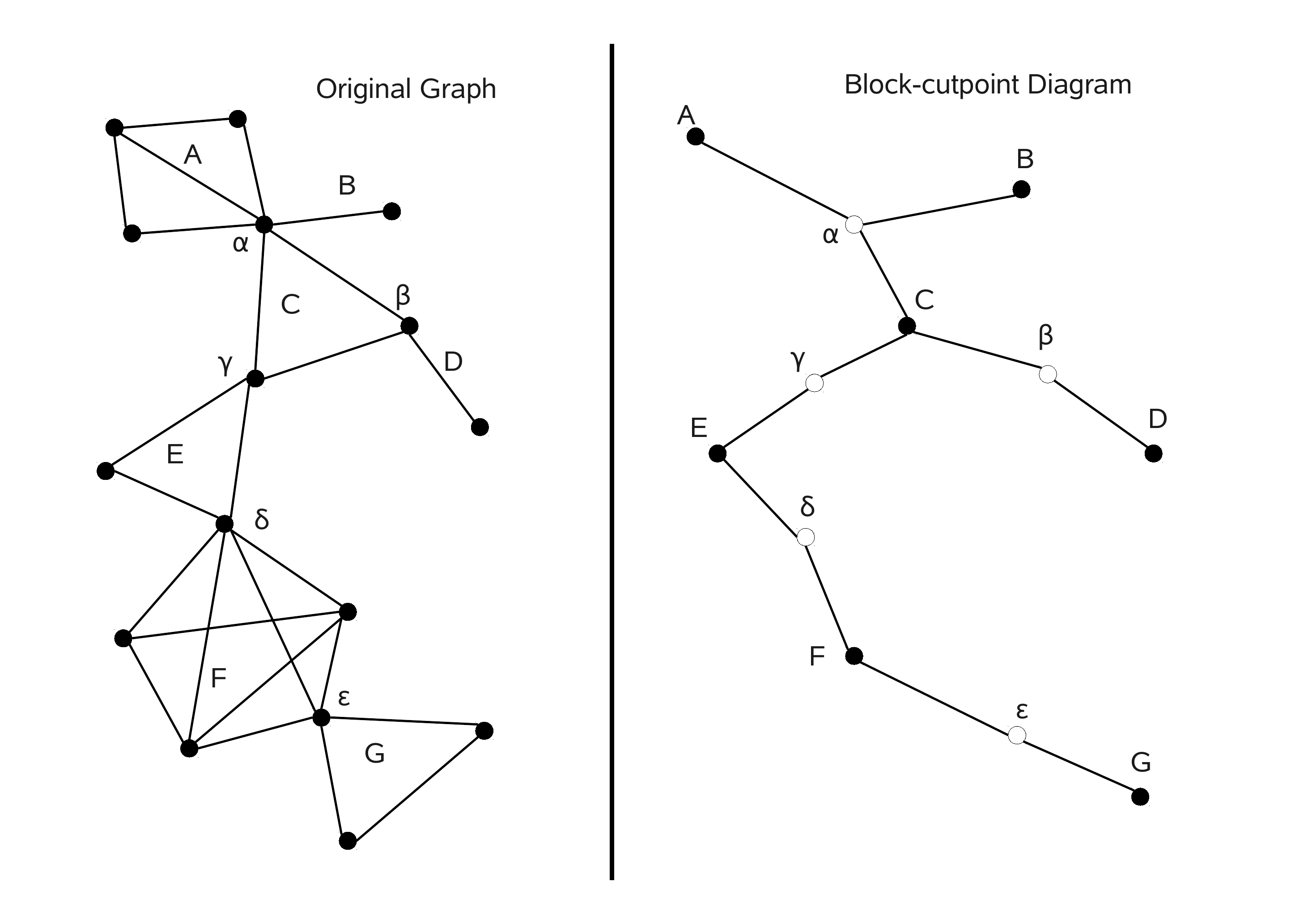} %esp
\caption{An example of a bc-tree}
\label{fig:bctreethesis}
\end{minipage}
\hspace{0.3cm}
\begin{minipage}[b]{0.45\linewidth}
\centering\includegraphics[scale=0.16]{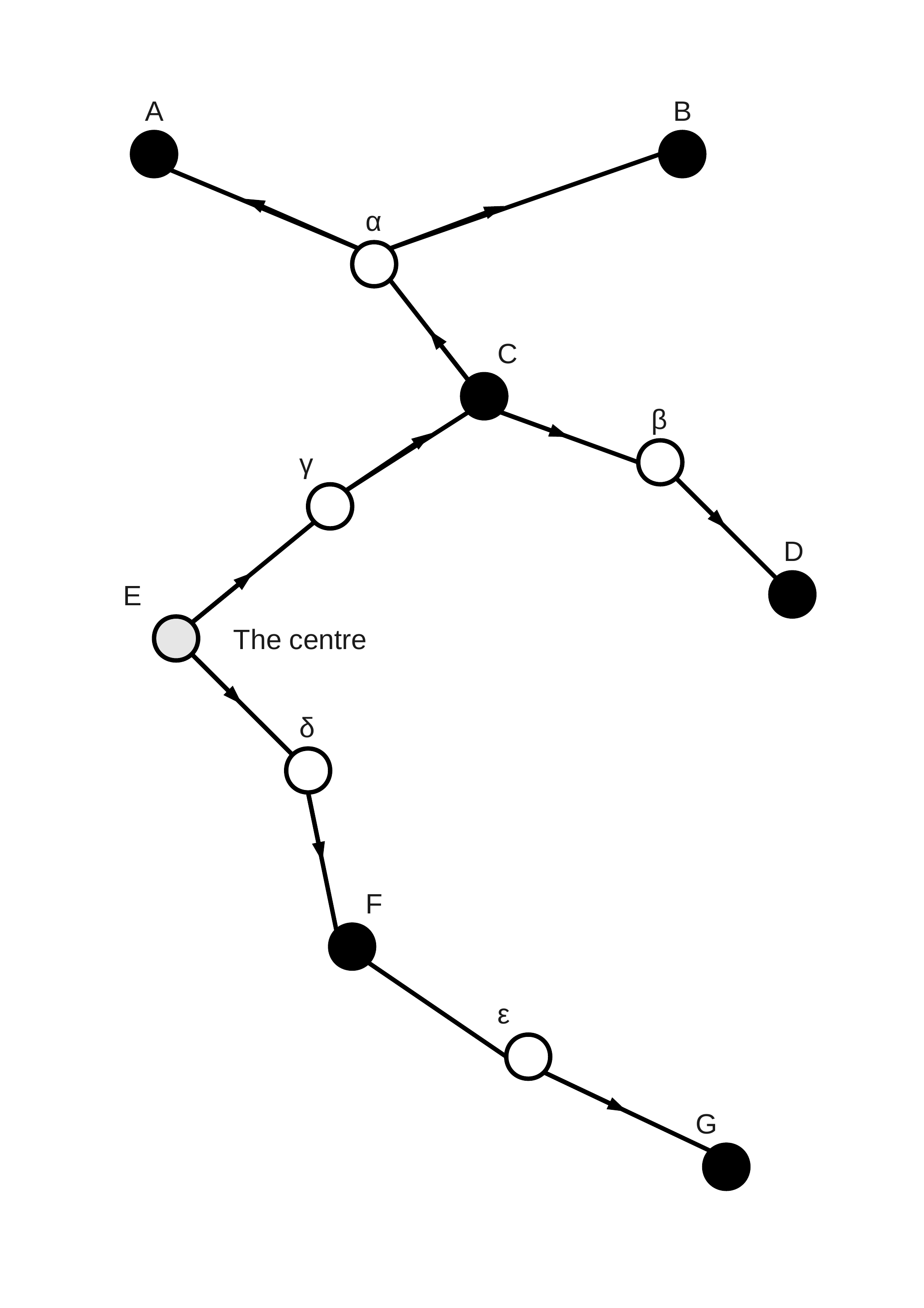}
\caption{The associated digraph}
\label{fig:digraph}
\end{minipage}
\end{figure}

\begin{lemma}[Block Decomposition]
\label{lem:blockdecomp}
If we decompose a connected graph on $n$ vertices into its block structure and let $I$ index the collection of blocks and $(k_i)_{i \in I}$ be the sequence of block sizes, then we have the following equality:
\be \sum\limits_{i \in I}(k_i-1)=n-1 \label{eq:bctreeidentity} . \ee
\end{lemma}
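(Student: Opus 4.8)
The plan is to induct on the number of blocks $|I|$, using the block cutpoint tree introduced above to locate a \emph{leaf block} whose removal reduces the problem to a strictly smaller connected graph. The base case is immediate: if $G$ has a single block, then $G$ is itself two-connected (or a single edge), the unique block has $k_1 = n$ vertices, and $\sum_{i}(k_i - 1) = n - 1$ holds directly.

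For the inductive step I would use the fact, established above, that the bc-tree is a tree with all of its leaves among the block-vertices, so a leaf block $B$ exists. Such a $B$ is incident in the bc-tree to exactly one articulation point $a$. Every vertex of $B$ other than $a$ lies in no other block; these $k_B - 1$ vertices are \emph{private} to $B$ and are not articulation points of $G$. I would delete these $k_B-1$ private vertices together with their incident edges, obtaining a graph $G'$ on $n - (k_B - 1)$ vertices. The graph $G'$ remains connected, because $a$ survives and $B$ was joined to the rest of $G$ only through $a$, and its block decomposition is exactly $I \setminus \{B\}$ with the same block sizes, since the deleted vertices meet only edges lying inside $B$. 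Applying the inductive hypothesis to $G'$ gives $\sum_{i \in I \setminus \{B\}}(k_i - 1) = \bigl(n - (k_B - 1)\bigr) - 1$, and adding $k_B - 1$ to both sides yields $\sum_{i \in I}(k_i - 1) = n - 1$.

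The main obstacle, and the step deserving care, is justifying precisely that the block structure of $G'$ is $I \setminus \{B\}$: one must verify that removing the private vertices of a leaf block neither merges two surviving blocks nor creates a new articulation point, and that $a$ either stays a cutpoint or becomes an ordinary vertex without changing any remaining block size. This is exactly where the leaf property of $B$ in the bc-tree is essential.

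As a cleaner alternative I might give a direct double-counting argument. Counting each vertex of $G$ with multiplicity equal to the number of blocks containing it gives $\sum_{i} k_i = n_0 + \sum_{v}\deg_T(v)$, where $n_0$ is the number of non-cutpoints, the sum is over the cutpoints $v$, and $\deg_T$ is the degree in the bc-tree. Since the bc-tree is bipartite with cutpoints on one side, the cutpoint degrees sum to its edge count $|I| + c - 1$, where $c$ is the number of cutpoints. Subtracting $|I|$ and using $n = n_0 + c$ then gives $\sum_{i}(k_i - 1) = n_0 + (|I| + c - 1) - |I| = n_0 + c - 1 = n - 1$ at once.
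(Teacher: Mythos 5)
Your proposal is correct, but both of your arguments differ from the one in the paper. The paper's proof orients the bc-tree away from its (unique) centre and uses the resulting digraph to assign to each block exactly one ``omitted'' vertex, namely the articulation point whose directed edge points into that block; this yields a bijective accounting of the left-hand side of \eqref{eq:bctreeidentity}, at the cost of a case split on whether the centre is an articulation point or a block. Your second argument, the double count $\sum_i k_i = n_0 + \sum_v \deg_T(v)$ combined with the fact that the bc-tree, being a tree on $|I|+c$ vertices, has $|I|+c-1$ edges all of which have exactly one cutpoint endpoint, is complete as written and is arguably cleaner than the paper's: it needs no orientation, no notion of centre, and no case analysis, relying only on the two standard facts that a non-cutpoint lies in exactly one block and a cutpoint lies in $\deg_T(v)$ blocks. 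Your first argument, the induction that peels off a leaf block, is also sound in outline, and you correctly identify the one step that needs real work (that deleting the private vertices of a leaf block leaves the remaining block decomposition intact and does not create new cutpoints); since you do not carry that verification out, the double-counting version is the one I would keep. All three proofs ultimately rest on the same structural input, namely that the block--cutpoint graph is a tree.
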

\begin{proof}
The key idea is to indicate what vertex we omit inside each block on the left hand side of \eqref{eq:bctreeidentity}. The digraph  gives an (essentially) unique prescription of the missing vertex in each block and in which block an articulation point is counted.

The digraph comprises of two types of directed edge $(B,a)$ and $(a,B)$, where $a$ indicates an articulation point and $B$ a block. The arrow points from the first entry to the second entry. Since there is a unique path from the centre to every other vertex, every vertex has precisely one edge in which they are the second entry. 

There are two key cases:
\begin{itemize}
\item[i)] \textbf{The centre is an articulation point}

For a block, $B$, the unique vertex we neglect on the left hand side of \eqref{eq:bctreeidentity} is the articulation point, $a$, where $(a,B)$ is the directed edge in the digraph. 

Every articulation point, $\alpha$, except the centre appears in an edge $(\beta, \alpha)$, for which it is the second entry, meaning it is enumerated in the left hand side of \eqref{eq:bctreeidentity} in precisely one block. The central articulation point is the only neglected vertex, which gives the right hand side of \eqref{eq:bctreeidentity}.

\item[ii)] \textbf{The centre is a block}

In this case every block, except the centre, can be given the prescription as for the first case. For the central block, we can choose precisely one of its neighbours to neglect. All articulation points in this case have an edge in which they are the second entry and so are counted, excepting the articulation point identified by the central block. Therefore, we have \eqref{eq:bctreeidentity}.
\end{itemize}
\end{proof}

\subsection{The Importance of $2n-3$ Edges}

To understand why the two-connected graphs on $n$ vertices with $2n-3$ edges play a special role, we first indicate that two-connected graphs with at least this number of edges cannot be minimal.

Given a graph $G$ on the vertex set $[n]$, we denote by $d_1$, the degree of the vertex labelled $1$.
 
\begin{lemma}
Two-connected graphs on $n$ vertices with $\geq 2n-3$ edges are not minimal, that is they necessarily have a chord.
\end{lemma}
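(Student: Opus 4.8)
The plan is to prove the contrapositive: any minimal two-connected graph on $n$ vertices has at most $2n-4$ edges, so that a two-connected graph with $\geq 2n-3$ edges must contain a chord (an edge whose removal leaves the graph two-connected). First I would invoke the block-cutpoint structure, but since a two-connected graph is its own single block, the real content must come from counting edges against degrees. The natural handle is the vertex labelled $1$ with degree $d_1$, which is why the statement singles it out. Every vertex in a two-connected graph has degree at least $2$, so $2e(G)=\sum_i \deg(i) \geq 2(n-1)+d_1$, giving a \emph{lower} bound on edges; I need an \emph{upper} bound on edges for minimal graphs, so the degree sum must be controlled from above for minimal graphs specifically.

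The key structural input I would use is an ear decomposition: every two-connected graph is built from a cycle by successively adjoining ``ears'' (paths whose interior vertices are new). A graph is minimal two-connected precisely when no ear can be shortened — equivalently, no edge is a chord. First I would fix an ear decomposition starting from a cycle $C_0$ and adding ears $P_1, \dots, P_k$. If the graph has $n$ vertices and each ear $P_j$ contributes $\ell_j$ internal vertices and $\ell_j+1$ edges, then $e(G) = |C_0| + \sum_j (\ell_j+1)$ and $n = |C_0| + \sum_j \ell_j$, so $e(G) - n = k - 0 = $ (number of non-trivial relations), and the count $e(G) = n + (\text{number of ears})$. The decisive observation is that in a \emph{minimal} two-connected graph every ear must have at least one internal vertex (an ear of length one, i.e. a single edge joining two existing vertices, is exactly a chord), so $\ell_j \geq 1$ for all $j$. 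This forces the number of ears $k \leq n - |C_0| \leq n-3$, whence $e(G) = |C_0| + \sum_j(\ell_j+1) = n + k \leq n + (n-3) = 2n-3$; a sharper bookkeeping using $|C_0|\geq 3$ and $\sum_j \ell_j = n - |C_0|$ yields $e(G) \leq 2n-4$.

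I expect the main obstacle to be making the edge count airtight rather than off by one, and pinning down exactly why a chordless minimal graph cannot reach $2n-3$. The cleanest route is to observe that each ear with $\ell_j$ internal vertices adds exactly one more edge than it adds vertices (it adds $\ell_j+1$ edges and $\ell_j$ vertices, a net of $+1$ to $e-n$), so $e(G) - n$ equals the number of ears beyond the initial cycle, and the initial cycle itself satisfies $e-n = 0$. Thus $e(G) = n + k$ where $k$ is the number of ears. Since minimality forbids length-one ears, each of the $k$ ears consumes at least one fresh vertex, and together with the $|C_0| \geq 3$ starting vertices we get $n \geq 3 + k$, i.e. $k \leq n-3$, so $e(G) \leq 2n-3$. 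To sharpen to $2n-4$ I would note that if equality $k = n-3$ held then $|C_0|=3$ and every ear would have exactly one internal vertex, and I would argue this configuration is degenerate or re-count the boundary case; alternatively I would phrase the final statement as: a two-connected graph with $\geq 2n-3$ edges strictly exceeds the maximum $n+k$ achievable without chords, hence has a chord. The delicate point throughout is the equivalence ``length-one ear $\iff$ chord'' and confirming that removing a chord preserves two-connectivity, which is exactly the computation already carried out in the proof that $\Psi$ is an involution in Section~\ref{sec:1PHCinv}.
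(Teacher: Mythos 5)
Your route is genuinely different from the paper's: the paper argues by induction on $n$, deleting the vertex $1$, applying the block decomposition identity $\sum_i(k_i-1)=n-2$ to $G\setminus\{1\}$ to force $d_1\geq 1+(\text{number of blocks})$, and then locating a chord at vertex $1$ via Whitney's theorem. Your ear-decomposition count is a legitimate alternative, and the core of it is sound: every two-connected graph has an open ear decomposition, a length-one ear is a chord (so minimality forces $\ell_j\geq 1$ for every ear in every decomposition), and the bookkeeping $e(G)=n+k$, $k\leq\sum_j\ell_j=n-|C_0|\leq n-3$ is correct. But this yields only $e(G)\leq 2n-3$ for minimal graphs, which is exactly one edge short of what the lemma needs, and neither of your two proposed repairs closes the gap as written. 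The claim that ``sharper bookkeeping using $|C_0|\geq 3$ and $\sum_j\ell_j=n-|C_0|$ yields $e(G)\leq 2n-4$'' is false: those facts alone give $2n-3$, and the triangle ($n=3$, $e=3=2n-3$, minimally two-connected) shows the bound cannot be improved without further input. Your alternative phrasing --- that a graph with $\geq 2n-3$ edges ``strictly exceeds the maximum $n+k$ achievable without chords'' --- is circular, since by your own count that maximum is $2n-3$, not $2n-4$.

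The equality case therefore must be analysed, not waved at. It is forced to be rigid: $e(G)=2n-3$ with all $\ell_j\geq 1$ implies $|C_0|=3$ and $\ell_j=1$ for every ear, so for $n\geq 4$ there is at least one ear $P_1$ with a single internal vertex $w$ and endpoints $u,v$ on the triangle $C_0$. Then the triangle edge $\{u,v\}$ is a chord: $(C_0\setminus\{u,v\})\cup P_1$ is a $4$-cycle, and appending the remaining ears $P_2,\dots,P_k$ (all of whose endpoints already exist) gives an ear decomposition of $G\setminus\{u,v\}$, which is therefore two-connected. With that paragraph added your proof is complete for $n\geq 4$; note that the lemma genuinely fails at $n=3$ (the paper's claim that $n=3$ is vacuous is itself inaccurate), so the restriction $n\geq 4$ should be made explicit.
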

\begin{proof} 
This is done by induction on the number of vertices $n$. 

The cases $n=2,3$ are vacuous and one can see from the examples in Figure \ref{fig:5edges4vertices} that this holds when $n=4$.

The connected graph $G\setminus \{1\}$ may be decomposed into its bc-tree. Each block with $l$ vertices in the tree has to have $\leq 2l-4$ edges or else we have a smaller graph which has a chord by induction. We note here that blocks of size $2$ or $3$ need to be treated separately. We let $l_i$ denote the size of the $i$th block not of size $2$ or $3$ and $b_2$ and $b_3$ denote the number of blocks of size $2$ and $3$ respectively. We have from lemma \ref{lem:blockdecomp}:
\be \sum_i(l_i-1)+b_2+2b_3=n-2 \ee
The total number of edges in $G \setminus \{1\}$ must then not exceed:
\be \sum_i2(l_i-1)-2b_{\geq 4}+b_2+3b_3 \leq 2n-4 - b_2-b_3-2b_{\geq 4} \ee
where $b_{\geq 4}$ indicates the number of blocks with more than four vertices.
We know that $e(G \setminus \{1\}) \geq 2n-3-d_1$ and so we obtain the inequality:
\be d_1 \geq 1+b_2+b_3+2b_{\geq 4} \geq 1+ \text{ total number of blocks} \ee

If we have only one block, then we either have two neighbours of $1$ and can apply induction to this block, as it will be a two-connected graph on $n-1$ vertices and at least $2(n-1)-3$ edges.

If we have at least three neighbours of $1$ in a block, say $\alpha$, $\beta$ and $\gamma$, then we may find a path $\alpha \to \beta \to \gamma$. This follows from Theorem \ref{thm:whitney}, since we have two internally disjoint paths between $\alpha$ and $\beta$ and between $\beta$ and $\gamma$. If we go along one of the paths between $\alpha$ and $\beta$ until we first hit one of the two paths between $\beta$ and $\gamma$, from here we follow the path towards $\beta$ and then take the disjoint path to $\gamma$, this is then a path between $\alpha$ and $\gamma$ that goes via $\beta$ and does not self-intersect. In this case $\{1,\beta\}$ forms a chord.

The final case is if we have at least two blocks and at most two neighbours of $1$ in a block. Then we have a block with two neighbours of $1$ call them $\alpha$ and $\beta$ and we have a third neighbour of $1$, $\gamma$ in some other block. Let $A$ be the articulation point of the block containing $\alpha$ and $\beta$ closest to $\gamma$. We have a path from $A$ to $\gamma$ outside of this block since it is a connected graph. We are also able to construct a path $\alpha \to \beta \to A$ since they are all in one block. Concatenating these paths gives again a path $\alpha \to \beta \to \gamma$ from which we determine $\{1,\beta\}$ is a chord. 

\end{proof}
It is also possible to construct a graph with $n$ vertices and $2n-4$ edges that is minimally two-connected, as shown in Figure \ref{fig:minimally2conn}. The number of edges being $2n-3$ marks some transition in the possibility of being minimal.

\begin{figure}[H]
\begin{minipage}[b]{0.5\linewidth}
\centering
\includegraphics[scale=0.2]{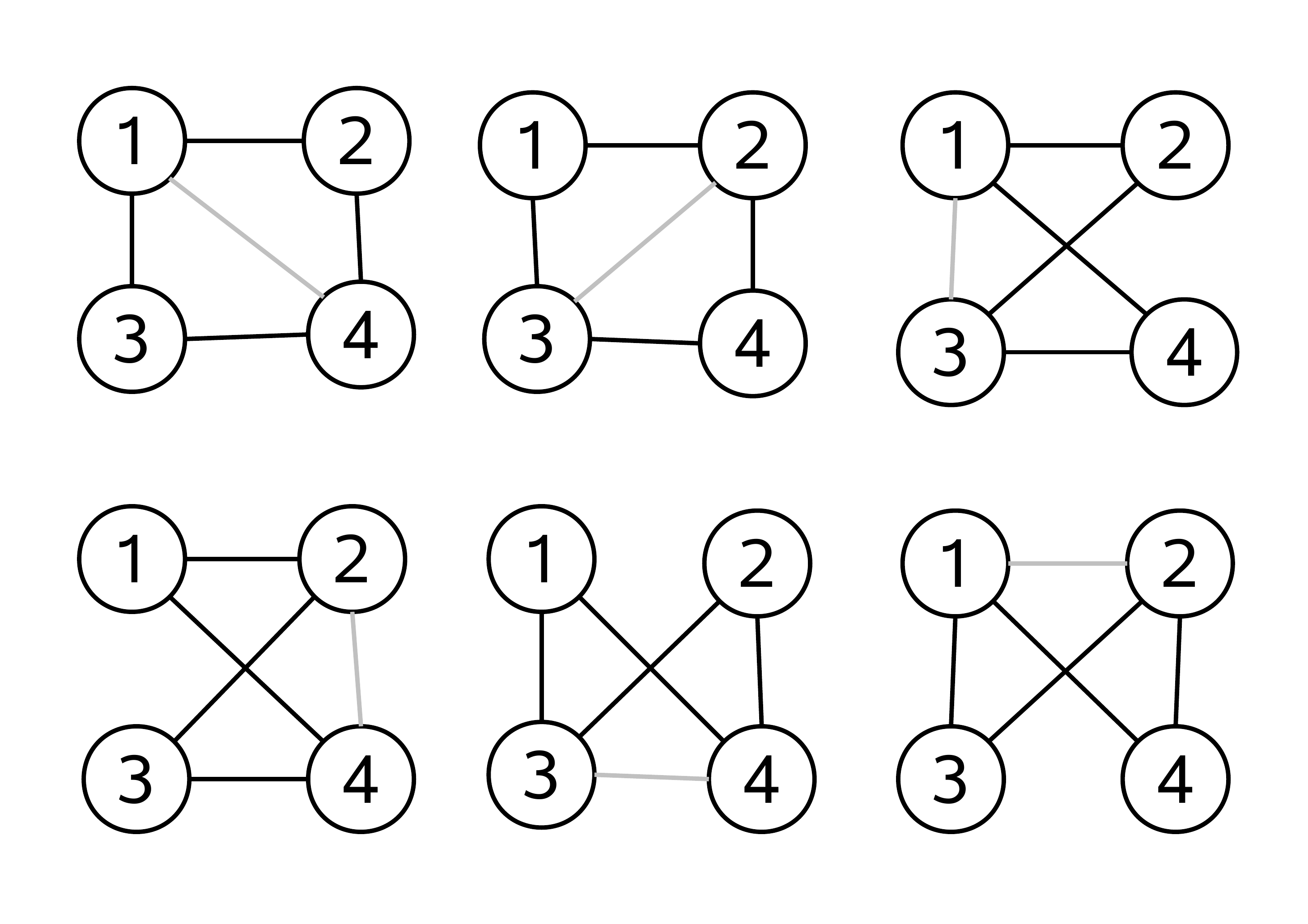} %eps1
\caption{The chords in graphs of $n=4$ vertices, excluding the complete graph }
\label{fig:5edges4vertices}
\end{minipage}
\hspace{0.3cm}
\begin{minipage}[b]{0.45\linewidth}
\centering
\includegraphics[scale=0.15]{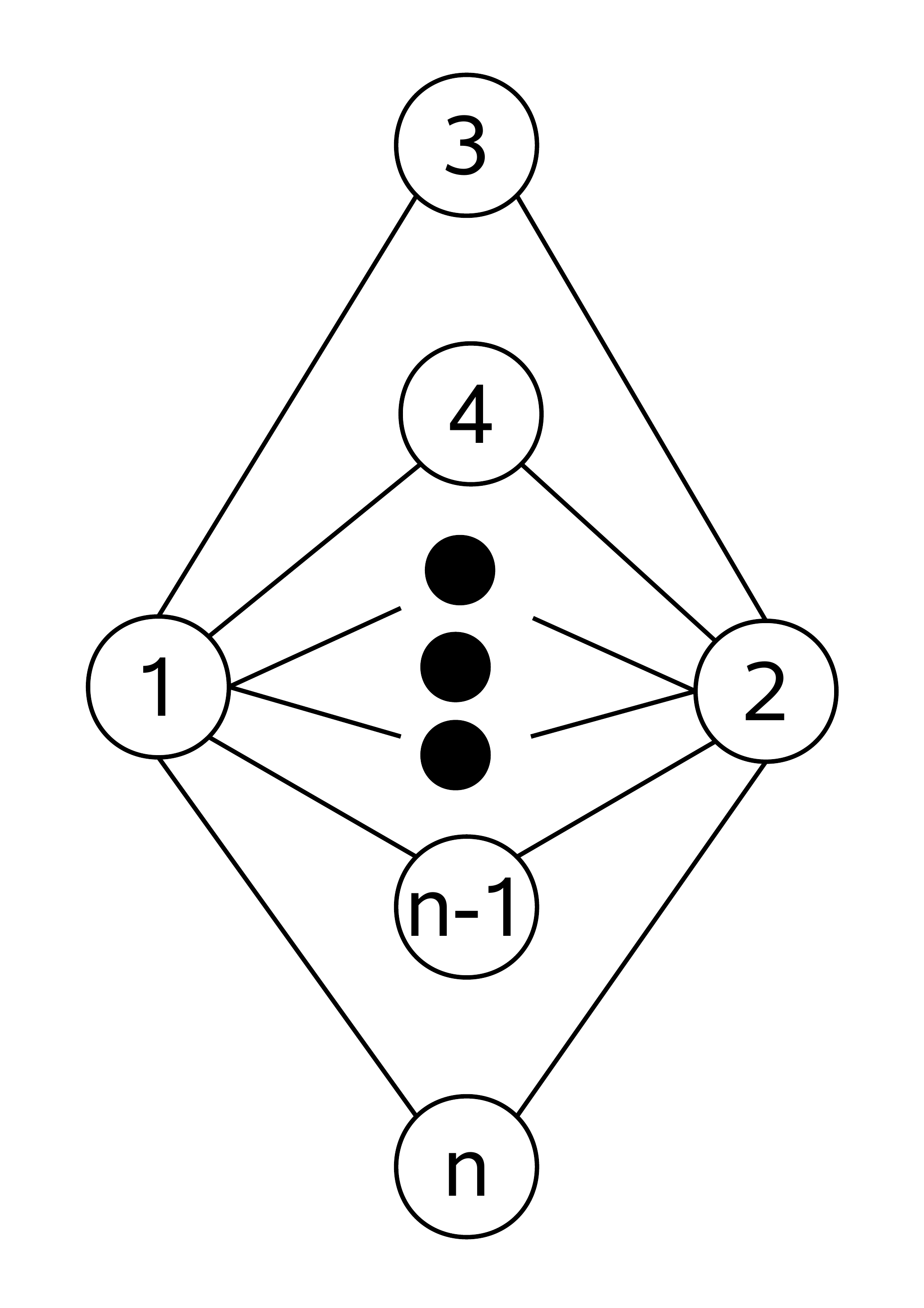} %eps1
\caption{A graph with $2n-4$ edges and $n$ vertices which is minimally two-connected }
\label{fig:minimally2conn}
\end{minipage}
\end{figure}

\section{The Connections with Tree Identities}
\label{sec:extend}
In this section, we convey the connection between involutions and partition schemes for connected graphs and how the latter is used to give estimations of the coefficients in the expansions. This is used as motivation to consider whether the two-connected graph involution may have such a connection.

The paper \cite{FP07} presents the notion of the partition in the sense of Penrose and gives the general idea of a partition. We define a partial order of $\mathcal{C}[n]$ by bond inclusion: $G \leq \tilde{G}$ $\iff$ $E(G) \subset E(\tilde{G})$. 
For $G \leq H$, we define the set $[G,H]=\{K\vert \, G \leq K \leq H\}$
The Penrose construction partitions the set of connected graphs into subsets of the form $[\tau, R(\tau)]$, where $R: \mathfrak{a}[n] \to \mathcal{C}[n]$. Many different constructions can be used to achieve an $R$. Penrose gave one explicit example in \cite{P67}.

\begin{definition}[Partition Scheme]
A partition scheme for the set of connected graphs $\mathcal{C}[n]$ is any map $R:\mathfrak{a}[n] \to \mathcal{C}[n]$ $\tau \mapsto R(\tau)$, such that:
\begin{itemize}
\item[i)] $E(R(\tau)) \supset E(\tau)$ and
\item[ii)] $\mathcal{C}[n]$ is the disjoint union of the sets $[\tau,R(\tau)]$ for $\tau \in \mathfrak{a}[n]$.
\end{itemize}
\end{definition}
\noindent The Penrose scheme is as follows:

For any vertex $i$ of $\tau \in \mathfrak{a}[n]$, we denote by $d(i)$ the tree distance between the vertices $i$ and $1$. We let $i'$ be the predecessor of $i$ i.e. $d(i')=d(i)-1$ and $\{i',i\} \in E(\tau)$. We associate to $\tau$, the graph $R_{\mathrm{Pen}}(\tau)$ found by adding (only once) to $\tau$ all edges $\{i,j\} \in [n]^{(2)}$ such that either:
\begin{itemize}
\item[P1] $d(i)=d(j)$ edges between vertices at same generation
\item[P2] $d(j)=d(i)-1$ and $i'<j$ edges between vertices one generation away
\end{itemize}
For a partition scheme $R$, denote by $\mathfrak{a}_R:=\{\tau \in \mathfrak{a}[n] \vert \, R(\tau)=\tau \}$ the set of $R$-trees. In particular, $\mathfrak{a}_{R_{\mathrm{Pen}}}$ is the set of Penrose trees. 

The following proposition emphasises where the Boolean partition offers advantages to providing estimations.

\begin{proposition}[Bounding the Connected Graph Sum]
In models where we have soft repulsion (a positive potential), the Mayer $f$-function satisfies $|1+f_e| \leq 1$. Using a partition scheme, we have the bound:
\be \left\vert \sum\limits_{G \in \mathcal{C}[n]}\prod\limits_{e \in E(G)}f_e \right\vert \leq \sum\limits_{\tau \in \mathfrak{a}[n]} \prod\limits_{e \in E(\tau)}|f_e| \leq |\mathfrak{a}[n]| \ee
\end{proposition}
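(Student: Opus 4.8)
The plan is to exploit the partition scheme directly. Since the defining property of $R$ is that $\mathcal{C}[n]$ is the disjoint union of the intervals $[\tau,R(\tau)]$, the sum over connected graphs reorganises into an outer sum over trees $\tau \in \mathfrak{a}[n]$ and an inner sum over the graphs $G$ with $E(\tau)\subseteq E(G)\subseteq E(R(\tau))$:
\be \sum_{G \in \mathcal{C}[n]}\prod_{e \in E(G)}f_e = \sum_{\tau \in \mathfrak{a}[n]}\sum_{G \in [\tau,R(\tau)]}\prod_{e \in E(G)}f_e. \ee
The crucial observation is that, for fixed $\tau$, the graphs in $[\tau,R(\tau)]$ are exactly those obtained from $\tau$ by adjoining an arbitrary subset $S$ of the complementary edge set $D(\tau):=E(R(\tau))\setminus E(\tau)$. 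Hence the inner sum factorises over a Boolean lattice: the obligatory tree edges pull out as a common factor, and the sum over subsets of $D(\tau)$ telescopes into a product of binomial factors,
\be \sum_{G \in [\tau,R(\tau)]}\prod_{e \in E(G)}f_e = \Bigl(\prod_{e \in E(\tau)}f_e\Bigr)\sum_{S \subseteq D(\tau)}\prod_{e \in S}f_e = \Bigl(\prod_{e \in E(\tau)}f_e\Bigr)\prod_{e \in D(\tau)}(1+f_e). \ee
This factorisation is the heart of the argument and is precisely the feature that makes the Boolean form of the partition so convenient, as emphasised in the discussion preceding the proposition.

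Next I would take absolute values and invoke the soft-repulsion hypothesis. Since $|1+f_e|\le 1$, every complementary factor can be discarded, giving the per-tree estimate
\be \Bigl\vert\sum_{G \in [\tau,R(\tau)]}\prod_{e \in E(G)}f_e\Bigr\vert = \prod_{e \in E(\tau)}|f_e|\prod_{e \in D(\tau)}|1+f_e| \le \prod_{e \in E(\tau)}|f_e|. \ee
Summing over $\tau$ and applying the triangle inequality once more yields the first asserted inequality. For the second inequality, note that for a positive potential $\varphi_e\ge 0$ one has $1+f_e = e^{-\beta\varphi_e}\in(0,1]$, whence $f_e\in(-1,0]$ and in particular $|f_e|\le 1$; therefore each tree product $\prod_{e\in E(\tau)}|f_e|$ is at most $1$, and summing over the $|\mathfrak{a}[n]|$ trees gives the final bound.

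I do not expect a substantive obstacle here: the entire estimate follows once the sum is reorganised along the partition scheme. The only step requiring care is the factorisation, namely the claim that the interval $[\tau,R(\tau)]$ is genuinely Boolean, i.e. that \emph{every} subset of $D(\tau)$ produces a graph lying back in $\mathcal{C}[n]$ and in the interval; this is guaranteed verbatim by the two defining conditions of a partition scheme. The indispensable analytic input is the hypothesis $|1+f_e|\le 1$, without which the complementary product could not be dropped and the clean bound by the tree weight would fail.
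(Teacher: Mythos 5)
Your proposal is correct and follows essentially the same route as the paper: reorganise the sum along the Boolean intervals $[\tau,R(\tau)]$, factor the inner sum into $\prod_{e\in E(\tau)}f_e\prod_{e\in E(R(\tau))\setminus E(\tau)}(1+f_e)$, and discard the second product using $|1+f_e|\le 1$. Your explicit justification of the final bound $\sum_{\tau}\prod_{e\in E(\tau)}|f_e|\le|\mathfrak{a}[n]|$ via $|f_e|\le 1$ for positive potentials is a small addition the paper leaves implicit, but the argument is the same.
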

\begin{proof}
For any numbers $(f_e)_{e \in [n]^{(2)}}$, we have :
\begin{align} \sum\limits_{G \in \mathcal{C}[n]} \prod\limits_{e \in E(G)} f_e &= \sum\limits_{\tau \in \mathfrak{a}[n]} \prod\limits_{e \in E(\tau)}f_e \sum\limits_{F \subset E(R(\tau)) \setminus E(\tau)} \prod\limits_{e \in F} f_e \notag \\
&=\sum\limits_{\tau \in \mathfrak{a}[n]}\prod\limits_{e \in E(\tau)}f_e \prod\limits_{e \in E(R(\tau))\setminus E(\tau)}(1+f_e) \label{eq:penroserewrite} \end{align}
When we take the absolute value of the right hand side, we may use the triangle inequality and bound the second product in \eqref{eq:penroserewrite} by $1$. 
\end{proof}

In the hardcore case, the second product in \eqref{eq:penroserewrite} is zero unless $R(\tau)=\tau$, giving that the fixed points of this $R$ function also give a combinatorial interpretation of the cancellations.The alternative combinatorial interpretation of fixed points provided by Penrose trees is that, considering the tree as being rooted at $1$, we are required to have precisely one vertex in each generation. This necessarily gives a linear tree. We have to determine the positions of $i \in [2,n]$, which are defined uniquely by their distance from $1$, which corresponds to a bijection, $f: [2,n] \to [n-1]$, giving the $(n-1)!$ factor. 

To define the Penrose involution arising from the Penrose construction, we make the following definition of a Penrose active edge.

For a graph $G$, we define the Hamming distance between vertices labelled $i$ and $j$ as $d_G(i,j)$ which is the length of the shortest path between $i$ and $j$.

\begin{definition}[Penrose Active Edges]
An edge $\{i,j\}$ is called \emph{Penrose active} for $G$ if, either:
\begin{itemize}
\item[i)] $d_G(1,i)=d_G(1,j)$  or
\item[ii)]  $d_G(1,i)=d_G(1,j)+1$ and $\exists i'<j$ such that $\{i,i'\} \in E(G)$ with $d_G(1,i')=d_G(1,j)$.
\end{itemize}
\end{definition}

We let $e_{\mathrm{Pen},G}^{\star}$ be the greatest Penrose active edge for $G$ in lexicographic order. 
\begin{lemma}[The Penrose Involution]
The mapping:
\be \Psi_{\mathrm{Pen}}: G \mapsto \begin{cases} G \oplus e_{\mathrm{Pen},G}^{\star} &\text{ if } G \text{ has a Penrose active edge} \\ G &\text{ otherwise} \end{cases} \ee
is an involution on connected graphs. 
\end{lemma}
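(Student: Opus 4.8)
The plan is to verify the two defining properties that make $\Psi_{\mathrm{Pen}}$ an involution: first, that applying the map does not leave the class of connected graphs, and second, that the selected edge $e^{\star}_{\mathrm{Pen},G}$ is \emph{stable} under the flip, in the sense that $e^{\star}_{\mathrm{Pen}, G \oplus e^{\star}_{\mathrm{Pen},G}} = e^{\star}_{\mathrm{Pen},G}$. Granting both, the argument closes immediately: if $G$ has a Penrose active edge $e = e^{\star}_{\mathrm{Pen},G}$, then $G' := G \oplus e$ has the same greatest Penrose active edge $e$, so $\Psi_{\mathrm{Pen}}(G') = G' \oplus e = (G \oplus e) \oplus e = G$; and if $G$ has no Penrose active edge it is a fixed point, hence trivially $\Psi_{\mathrm{Pen}}^2(G) = G$. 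So the whole content sits in the stability of the selected edge.

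The crux is that the Penrose active status of an edge $\{i,j\}$ depends only on the Hamming distances $d_G(1,\cdot)$, i.e.\ on the \emph{generation function} of vertices relative to the root $1$. Thus I would first establish the key invariance lemma: adding or removing the single edge $e^{\star}_{\mathrm{Pen},G}$ does not change the distance $d_G(1,v)$ for any vertex $v$. For a Penrose active edge of type (i), both endpoints lie in the same generation, so the edge is a ``horizontal'' chord and cannot create a shortcut to the root; for type (ii), the endpoints differ in generation by exactly one, matching an already-present edge between consecutive generations, so again no shortest-path distance to $1$ is altered. Once the generation function is shown invariant under the flip of $e^{\star}_{\mathrm{Pen},G}$, the entire set of Penrose active edges is unchanged between $G$ and $G \oplus e^{\star}_{\mathrm{Pen},G}$ — since that set is determined purely by the distances and the lexicographic comparison $i'<j$ — and in particular the lexicographically greatest one coincides, giving stability. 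This mirrors exactly the mechanism in the Bernardi and permissible-edge involutions of the previous sections, where one checks that flipping the chosen edge leaves the pool of candidate edges fixed.

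Connectivity is the easier half. Adding an edge to a connected graph keeps it connected, so I need only treat the case where $e^{\star}_{\mathrm{Pen},G}$ is present in $G$ and is removed. Here I would argue that a Penrose active edge is a chord: its removal cannot disconnect $G$ because its endpoints remain joined through their common neighbours in the relevant generations. Concretely, for a type (i) edge the two endpoints are at equal distance from $1$ and each retains its independent shortest path to the root, so they stay in the same component; for a type (ii) edge the witness $i'$ with $\{i,i'\} \in E(G)$ provides an alternative route. Combining the connectivity check with the distance-invariance lemma yields that $\Psi_{\mathrm{Pen}}$ maps $\mathcal{C}[n]$ to itself and squares to the identity.

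The main obstacle I anticipate is the distance-invariance lemma, specifically the type (ii) case: one must rule out that removing $e^{\star}_{\mathrm{Pen},G}$ \emph{increases} some distance $d_G(1,v)$ (thereby changing generations and corrupting the active-edge set), and dually that adding it does not \emph{decrease} a distance. The delicate point is that a shortest path from $1$ to a vertex could in principle route through the edge $\{i,j\}$, so I would need to show that whenever $\{i,j\}$ is Penrose active, any shortest root-path using it can be rerouted — using the generation constraint $d_G(1,i) = d_G(1,j)+1$ together with the witness $i'$ — without changing its length. Getting this rerouting argument clean, and confirming it is genuinely symmetric so that the candidate set is literally identical before and after the flip, is where the real care is required.
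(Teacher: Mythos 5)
Your proposal follows essentially the same route as the paper's proof: establish that flipping $e^{\star}_{\mathrm{Pen},G}=\{i,j\}$ preserves every distance $d_G(1,\cdot)$ by rerouting any root-path through $\{i,j\}$ via the witness $i'$, deduce that the set of Penrose active edges (and hence its lexicographic maximum) is unchanged, and note that the rerouting also preserves connectivity. The one point where your phrase ``determined purely by the distances'' undersells the work --- and which the paper treats explicitly --- is that condition (ii) also requires a witness \emph{edge} to be present in the graph, and the flipped edge $\{i,j\}$ could itself be the witness for some other edge $\{i,b\}$; the paper closes this by observing that $i'<j<b$ allows $\{i,i'\}$ to substitute as the witness in both $G$ and $G\oplus e^{\star}_{\mathrm{Pen},G}$.
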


\begin{proof}
We first prove that $d_G(1,k)=d_{\Psi_{\mathrm{Pen}}(G)}(1,k)$. The two graphs $G$ and $\Psi_{\mathrm{Pen}}(G)$ differ only on an edge $e^{\star}_{\mathrm{Pen}, G}=:\{i,j\}$,  where $|d_G(1,i)-d_G(1,j)|\leq 1$. Throughout this proof in the case where we have equality, we assume without loss of generality that $d_G(1,i)+1=d_G(1,j)$.

For any $k$, we consider the distance from the vertex labelled $k$ to $1$ in both graphs. This is defined through the shortest path from $1$ to $k$. We indicate that for any path between $1$ and $k$ containing the edge $\{i,j\}$ we can find a path of the same or shorter length that does not contain this edge. If $d_G(1,j)=d_G(1,i)$, then considering a path from $1$ to $k$ up to this edge, we realise that the shortest length the path up to this edge can be is $d_G(1,j)+1$, but we know that there is a shorter path to this endpoint because $d_G(1,j)=d_G(1,i)$ and so we can replace this initial path with a shorter path. 

We are left with the case $d_G(1,j)+1=d_G(1,i)$. We know from property $ii)$ that there is some $i'$ such that $d_G(1,i')=d_G(1,j)$ and $\{i',i\}$ is an edge in both graphs. Therefore if the initial segment of a path includes the edge $\{i,j\}$, then the shortest this can be is $d_G(1,i)$. If the initial segment ends at $j$ rather than $i$ then we know we have a shorter path to $j$ that we can replace this initial segment by. Otherwise it ends at $i$. We know that we have a path of length $d_G(1,j)$ to $i'$ on which we can attach the edge $\{i,i'\}$ to construct a new path of the same length but not using this edge.

We now have that condition $i)$ for Penrose active edges is the same in both graphs, since the graph distance is the same. We now indicate that an edge satisfies condition $ii)$ independent of the presence of $\{i,j\}$. We realise if $\{i,j\}$ was added or removed satisfying $i)$ then it has no effect on an edge satisfying $ii)$, since $ii)$ depends on edges between generations. Therefore, we consider that $\{i,j\}$ satisfies $ii)$. Since $d_G(1,i)=d_G(1,j)+1$, we have an $i'<j$ such that $\{i',i\}$ is an edge in both graphs and $d_G(1,i')=d_G(1,j)$. This means that if we use $j$ to invoke applying condition $ii)$ for an edge to be Penrose active, then we can invoke it in both cases by using $i'$. 
\end{proof}

We can also go the other way and find a Bernardi construction to provide an appropriate partition. The map $R:\mathfrak{a}[n] \to \mathcal{C}[n]$, which adds to $\tau$ all externally active edges for the given tree graph $\tau$ is the appropriate partition scheme. This is explained in the context of matroids below.

In the work of Bj\"{o}rner and Sokal \cite{B92, S05}, it is explained that for a matroid $M$, where we give a total order to the underlying set $E(M)$, we may find a partition of the collection of subsets of $E=E(M)$ according to the matroid structure. We introduce below some key definitions for matroids to introduce this connection, which can be found in the book of Oxley \cite{O92} and the work of Faris \cite{F12}.

A matroid $M$ on the ground set $E(M)=E$ is defined by a collection of independent subsets, denoted $\mathcal{I}(M)=\mathcal{I}$. These subsets must satisfy the following three axioms:

\begin{enumerate}

\item $\emptyset \in \mathcal{I} $ (non empty)
\item If $X \in \mathcal{I}$ and $X' \subset X$ then $X' \in \mathcal{I}$ (downward closed)
\item If $X \in \mathcal{I}$ and $Y \in \mathcal{I}$ and $|X| < |Y|$, then there exists $l \in Y\setminus X$ with $X \cup \{l\} \in \mathcal{I}$ (augmentation property)

\end{enumerate}
For a graphical matroid, the ground set is $[n]^{(2)}$. We define the independent sets as forests or acyclic graphs. 
\begin{definition}
A maximal independent set $X \in \mathcal{I}(\mathcal{M})$ is called a basis. The set of bases is denoted $\mathcal{B}(M)$.
\end{definition}
The maximal independent sets for a graphical matroid are therefore trees. 
\begin{definition}
The rank of a matroid $M$, $\mathrm{rk}(M)$ is the cardinality of a basis element.
\end{definition}
All bases have the same cardinality and so the rank is well defined. A matroid can be defined by its set of bases, since $X \in \mathcal{I}(M)$ if and only if $X \subseteq Y$, for some $Y \in \mathcal{B}(M)$. 

\begin{definition}[Restricted Matroid and Rank] Given a matroid, $M$, consider $X \subseteq E(M)$. There is a matroid $M|_X$, which is the restriction of $M$ to $X$. It has ground set $X$ and $\mathcal{I}(M|_X)=\{Y \in \mathcal{I}(M) | Y \subseteq X \}$.

For $X \subseteq E(M)$, the rank of $X$, $\mathrm{rk}(X)$ is the rank of the matroid $M|_X$ or alternatively the cardinality of the largest independent subset of $X$.
\end{definition}
We note that $\mathrm{rk}(X)=|X|$ if and only if $X$ is independent, so the rank function completely determines the matroid.

\begin{definition}[Dual of a Matroid]
The dual of a matroid is defined on the same ground set, but has a dual rank function rk$^{\star}$, defined by:
\be \text{rk}^{\star}(A):=|A|-\text{rk}(E)+\text{rk}(E \setminus A) \ee
\end{definition}

Let $\mathcal{B}$ be the set of bases for $E$. The dual basis set is then $\mathcal{B}^{\star}=\{E\setminus B \vert B \in \mathcal{B}\}$. We fix a total order on $E$ in the following.

\begin{definition}[Externally Active]
Let $B \in \mathcal{B}$. An element $e \in E\setminus B$ is \emph{externally active} on $B$ if $e$ is dependent on the list of elements of $B$ larger than it. We let $\tilde{\varepsilon}(B)$ be the set of externally active elements.
\end{definition}

\begin{definition}[Internally Active]
An element $e \in B$ is \emph{internally active} on $B$, if in the dual matroid $e$ is externally active on the complement $B^c=E \setminus B \in \mathcal{B}^{\star}$.  We denote by $\tilde{\iota}(B)=\tilde{\varepsilon}^{\star}(B^c)$ the set of internally active elements.
\end{definition}

For $R \subseteq S \subseteq E$, we define $[R,S]=\{A \vert R \subseteq A \subseteq S\}$. 

\begin{proposition}
$2^E$ can be written as the disjoint union:
\be 2^E=\bigsqcup_{B \in \mathcal{B}}[B\setminus \tilde{\iota}(B),B \cup \tilde{\varepsilon}(B)] \label{eq:C1} . \ee
\end{proposition}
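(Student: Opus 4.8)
The plan is to prove the equivalent statement that every $A\subseteq E$ lies in exactly one of the intervals $[B\setminus\tilde\iota(B),B\cup\tilde\varepsilon(B)]$, which simultaneously gives disjointness (uniqueness of $B$) and exhaustiveness (existence of $B$). I would begin with the elementary reformulation that, since $\tilde\varepsilon(B)\subseteq E\setminus B$ and $\tilde\iota(B)\subseteq B$ are disjoint, the containment $B\setminus\tilde\iota(B)\subseteq A\subseteq B\cup\tilde\varepsilon(B)$ holds if and only if $A\setminus B\subseteq\tilde\varepsilon(B)$ and $B\setminus A\subseteq\tilde\iota(B)$. Here $\mathrm{cl}(\cdot)$ will denote the matroid closure (span), so that $e\in\tilde\varepsilon(B)$ means $e\in\mathrm{cl}(\{b\in B:b>e\})$. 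It is worth recording at the outset that the whole assertion is self-dual: complementing the two containments and using the paper's identities $\tilde\iota(B)=\tilde\varepsilon^{\star}(B^c)$ and the dual relation $\tilde\varepsilon(B)=\tilde\iota^{\star}(B^c)$ shows that $A$ lies in the $M$-interval of $B$ precisely when $A^c=E\setminus A$ lies in the corresponding interval of the dual matroid $M^{\star}$ for the dual basis $B^c\in\mathcal{B}^{\star}$. This lets two symmetric cases below be handled by a single argument.

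For disjointness, suppose $A$ lay in the intervals of two distinct bases $B_1\neq B_2$, and let $e$ be the largest element of $B_1\triangle B_2=B_1^c\triangle B_2^c$, so that $\{b\in B_1:b>e\}=\{b\in B_2:b>e\}$. If $e\in A$, then $e$ lies in exactly one of the bases, say $e\in B_i\setminus B_j$; the reformulation forces $e\in A\setminus B_j\subseteq\tilde\varepsilon(B_j)$, that is $e\in\mathrm{cl}(\{b\in B_j:b>e\})=\mathrm{cl}(\{b\in B_i:b>e\})\subseteq\mathrm{cl}(B_i\setminus\{e\})$, contradicting the independence of $B_i$ since $e\in B_i$. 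If instead $e\notin A$, I pass to the dual: by the self-duality above $A^c$ lies in the $M^{\star}$-intervals of $B_1^c$ and $B_2^c$, and since $e\in A^c$ the preceding argument applied in $M^{\star}$ yields the same contradiction. Hence the basis is unique and the intervals are pairwise disjoint.

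For exhaustiveness I would construct, for each $A\subseteq E$, a basis $B(A)$ with $A$ in its interval, by induction on $|E|$ via deletion and contraction of the largest element $g=\max E$. The key local observation is that, because no ground-set element exceeds $g$, the spans governing the activity of $g$ are spans of the empty set: $g$ is externally active on every basis not containing it exactly when $g$ is a loop ($g\in\mathrm{cl}(\emptyset)$), and internally active on every basis containing it exactly when $g$ is a coloop ($g\in\mathrm{cl}^{\star}(\emptyset)$). Thus if $g$ is a loop it is a free coordinate of every interval and one recurses on $M\setminus g$ with $A\setminus g$; if $g$ is a coloop it is forced into every basis and one recurses on $M/g$; and if $g$ is ordinary then $g$ is neither internally nor externally active, so its presence in an interval is forced to agree with $A$, and one recurses on $M/g$ with $A\setminus g$ when $g\in A$ and on $M\setminus g$ with $A$ when $g\notin A$. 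Splitting $2^E$ into the subsets containing $g$ and those avoiding $g$ and reassembling the inductively obtained decompositions of $2^{E\setminus g}$ for the two minors reproduces the decomposition for $M$.

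The main obstacle is the bookkeeping in this inductive step: one must verify the lemma that for an element $f<g$ the external and internal activities of $f$ computed in $M$ coincide with those computed in the relevant minor $M\setminus g$ or $M/g$, so that the intervals from the two minors glue correctly and the reassembled $[B\setminus\tilde\iota(B),B\cup\tilde\varepsilon(B)]$ are the genuine $M$-intervals. This amounts to comparing the fundamental circuit and cocircuit of $f$ across deletion and contraction, and is exactly where the total order and the choice $g=\max E$ are essential, since removing the largest element leaves the activities of all smaller elements unchanged. Once existence is established in this way, the uniqueness already proved guarantees that $B(A)$ is the only basis with $A$ in its interval, so $2^E$ is the disjoint union claimed in \eqref{eq:C1}.
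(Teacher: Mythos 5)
The paper does not actually prove this proposition: it is stated as a quoted result of Crapo--Bj\"{o}rner--Sokal type, with the reader referred to \cite{B92, S05}, so there is no in-paper argument to compare against. Your proposal is, as far as I can check, a correct and essentially self-contained proof along the standard lines. The reformulation of membership in the interval as $A\setminus B\subseteq\tilde{\varepsilon}(B)$ and $B\setminus A\subseteq\tilde{\iota}(B)$ is right, the self-duality observation (using $\tilde{\iota}(B)=\tilde{\varepsilon}^{\star}(B^c)$ and $M^{\star\star}=M$) is legitimate and does let you halve the case analysis, and the uniqueness argument via the largest element $e$ of $B_1\,\Delta\, B_2$ is complete and airtight: above $e$ the two bases agree, so external activity of $e$ for the basis omitting it places $e$ in the closure of a subset of the other basis minus $e$, contradicting independence, and the $e\notin A$ case genuinely does reduce to this by duality. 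The existence half is the standard deletion--contraction induction on $g=\max E$, and your trichotomy (loop, coloop, ordinary) correctly identifies when $g$ is a free coordinate of every interval and when its membership in $A$ forces $g\in B$ or $g\notin B$.

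The one place where your write-up is a sketch rather than a proof is the step you yourself flag: that for $f<g$ the activities of $f$ relative to $B$ in $M$ agree with those relative to the corresponding basis of $M\setminus g$ or $M/g$. This is true and fillable by exactly the computation you indicate --- e.g.\ for contraction, $\mathrm{cl}_{M/g}(X)=\mathrm{cl}_M(X\cup\{g\})\setminus\{g\}$ and $\{b\in B\setminus\{g\}: b>f\}\cup\{g\}=\{b\in B: b>f\}$ because $g=\max E$, with the internal-activity statements following by applying the same identity in $M^{\star}$ via $(M\setminus g)^{\star}=M^{\star}/g$ --- but as written it is asserted rather than verified, so you should spell it out before calling the proof finished. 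I would also note that your uniqueness argument makes the reassembly step lighter than you suggest: once each $A$ is shown to lie in \emph{some} interval, disjointness is already known globally, so you do not separately need the glued intervals to partition $2^E$; you only need existence.
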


For the case of the graphical matroid, we recall that the bases are the collection of trees. If we use the lexicographical order on the edges, then an edge is externally active for a tree $\tau$ in this sense, if and only if it is externally active in the sense of Bernardi \cite{B08}. This is due to the fact that all independent sets are forests and so a set of edges is dependent if it creates a cycle. We emphasise that for connected graphs, internally active edges play no role, since trees are minimally connected graphs. This therefore gives, when we intersect each set with connected graphs:

\be \mathcal{C}[n]=\bigsqcup_{\tau \in \mathfrak{a}[n]}[\tau, R(\tau)] ,\ee 
where $R(\tau)$ has edge set $E(\tau) \cup \tilde{\varepsilon}(\tau)$.

We note that the Penrose construction does not fit in the construction given above. In Figure \ref{fig:penrosenoorder}, we see that we would add the dashed edge in each case. In order to do this, we cannot have a consistent ordering on the edges $\{2,3\}$, $\{2.4\}$ and $\{3,4\}$. 

\begin{figure}[H]
\centering
\includegraphics[scale=0.2]{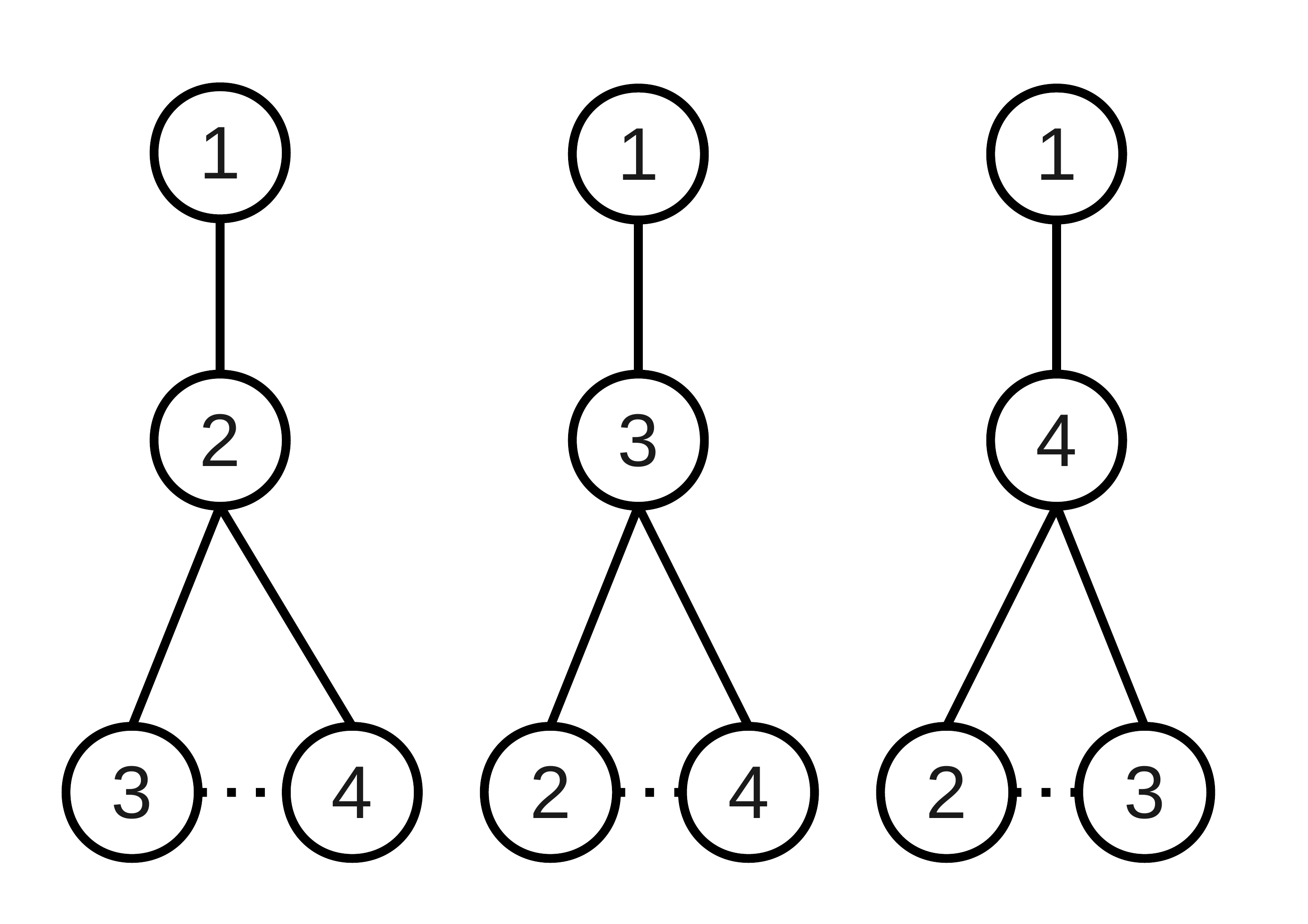}
\caption{Three Graphs to Indicate that the Penrose Construction is Different}
\label{fig:penrosenoorder}
\end{figure} 

The motivation of emphasising this connection is to understand if a similar connection may be drawn for two-connected graphs as the important context of the result.

\section{Outlook and Conclusions}
\label{sec:conclusion}
The main conclusion is that we are able to identify combinatorially the cancellations in the alternating sums of weighted two-connected graphs. The combinatorial factor arises from increasing trees on the subset $[n-1]$ of the vertices, with the vertex $n$ adjacent to every other vertex. There are modified versions for this in the case of the polytope, where we have the isomorphic graph structures, differing only through a relabelling in the form $i \mapsto i+s \text{ (mod) } n$ for all $s \in [n]$.

The key outlook for the work contained in this paper is to modify the set up explained in section \ref{sec:extend} towards two-connected graphs so that we obtain a helpful resummation of the graphs amenable to suitable estimation, which is important for the virial expansion. The parallel that is useful to draw here is that for the cluster expansion, we have the increasing and Cayley trees as the combinatorial objects representing the two cases above. It has been shown by Groeneveld \cite{G62} that these examples provide the extreme cases for positive potentials and an adaptation is available for stable potentials.

\medskip\noindent
{\bf Acknowledgements.}
The author would like to thank the anonymous referees for constructive comments in improving this article. The work for this paper has been funded by EPSRC grant EP/G056390/1 and SFB TR12. The author acknowledges helpful discussions with D. Brydges, R. Koteck\'{y}, D. Ueltschi and S. Jansen for particular discussions relating to this paper.

\end{document}